\documentclass[12pt]{article}

\usepackage{amsmath,amsfonts,amssymb,amsthm,amscd}
\usepackage{epsfig}

\newtheorem{theorem}{Theorem}[section]             
            
\newtheorem{lemma}[theorem]{Lemma}
\newtheorem{corollary}[theorem]{Corollary}
\newtheorem*{conjecture}{Conjecture}

\newtheorem*{question}{Question}
\newtheorem{obs}[theorem]{Observation}
\newtheorem{claim}[theorem]{Claim}

\usepackage{hyperref}



\def\inst#1{$^{#1}$}

\long\def\onefigure#1#2{
\begin{figure*}[tbh]
\begin{center}
\includegraphics{#1}
\end{center}
\caption{#2}
\end{figure*}
} 

\newcommand{\myfig}[2]  
{\onefigure{{g-#1.eps}}{\label{f:#1} #2} }

\begin{document}

\title{Universal Sets for Straight-Line Embeddings of Bicolored Graphs\thanks{
Research was supported by the project 1M0545 of the Ministry of Education of the Czech Republic 
and by the grant SVV-2010-261313 (Discrete Methods and Algorithms).
Viola M\'{e}sz\'{a}ros was also partially supported by OTKA grant K76099
and by European project IST-FET AEOLUS.
Josef Cibulka and Rudolf Stola\v{r} were also 
supported by the Czech Science Foundation under the contract no.\ 201/09/H057.
Josef Cibulka, Jan Kyn\v{c}l and Pavel Valtr were also supported by the Grant Agency 
of the Charles University, GAUK 52410.
\newline 
Part of the research was conducted during the Special Semester on Discrete
and Computational Geometry at \'Ecole Polytechnique F\'ed\'erale de Lausanne, organized and
supported by the CIB (Centre Interfacultaire Bernoulli) and the SNSF (Swiss
National Science Foundation).
} \thanks{A preliminary version appeared in 
proceedings of Graph Drawing 2008~\cite{gdversion}.}
} 

\author{Josef Cibulka\inst{1}, Jan Kyn\v{c}l\inst{2}, Viola M\'{e}sz\'{a}ros\inst{2,3}, \\
Rudolf Stola\v{r}\inst{1} and Pavel Valtr\inst{2}
} 

\date{}

\maketitle

\begin{center}
{\footnotesize
\inst{1} 
Department of Applied Mathematics, \\
Charles University, Faculty of Mathematics and Physics, \\
Malostransk\'e n\'am.~25, 118~ 00 Prague, Czech Republic; \\ 
\texttt{cibulka@kam.mff.cuni.cz, ruda@kam.mff.cuni.cz} 
\\\ \\
\inst{2}
Department of Applied Mathematics and Institute for Theoretical Computer Science, \\
Charles University, Faculty of Mathematics and Physics, \\
Malostransk\'e n\'am.~25, 118~ 00 Prague, Czech Republic; \\
\texttt{kyncl@kam.mff.cuni.cz}
\\\ \\
\inst{3}
Bolyai Institute, University of Szeged, \\
Aradi v\'ertan\'uk tere 1, 6720 Szeged, Hungary; \\
\texttt{viola@math.u-szeged.hu}
}
\end{center}  


\begin{abstract}
A set $S$ of $n$ points is \emph{$2$-color universal} for a graph $G$ on $n$ vertices if for every proper 
$2$-coloring of $G$ and for every $2$-coloring of $S$ with the same sizes of color classes as $G$ has, 
$G$ is straight-line embeddable on $S$.

We show that the so-called double chain is $2$-color universal for paths if each of the two
chains contains at least one fifth of all the points, but not if one of the chains is more than 
approximately $28$ times longer than the other.

A $2$-coloring of $G$ is \emph{equitable} if the sizes of the color classes differ by at most $1$.
A bipartite graph is \emph{equitable} if it admits an equitable proper coloring.
We study the case when $S$ is the double-chain with chain sizes differing 
by at most $1$ and $G$ is an equitable bipartite graph. We prove that this $S$ is not $2$-color universal 
if $G$ is not a forest of caterpillars and that it is $2$-color universal for equitable caterpillars 
with at most one half non-leaf vertices. 
We also show that if this $S$ is equitably $2$-colored, then equitably properly $2$-colored 
forests of stars can be embedded on it.

\end{abstract}


\maketitle

\section{Introduction}
\subsection{Previous Results}
It is frequently asked in geometric graph theory whether
a given graph $G$ can be drawn without edge crossings on a given planar point set $S$ 
under some additional constraints on the drawing. In this paper, we always assume 
$|V(G)| = |S|$.

One possibility is to prescribe a fixed position for each vertex of $G$.
If the edges are allowed to be arbitrary curves, then we can obtain a planar drawing 
of an arbitrary graph $G$ by moving vertices from an arbitrary planar drawing of $G$ 
to the given points. Pach and Wenger showed~\cite{pachwenger} that every planar $G$ with 
prescribed vertex positions can be drawn so that each edge is a piecewise-linear curve with $O(n)$ 
bends and that this bound is tight even if $G$ is a path.

In another setting, we are only given the graph $G$ and the set of points $S$, but we
are allowed to choose the mapping between $V(G)$ and $S$.
Kaufmann and Wiese~\cite{kaufmannwiese} showed that two bends per edge are then always enough 
and for some graphs necessary. An \emph{outerplanar graph} is a graph admitting a planar drawing 
where one face contains all vertices. By a result of Gritzmann et al.~\cite{gritzmann91}, 
outerplanar graphs are exactly the graphs with a straight-line planar drawing on 
an arbitrary set of points in general position.

In this paper we are dealing with a combination of these two versions. The vertices and
points are colored with two colors and each vertex has to be placed on a point of the 
same color. An obvious necessary condition to find such a drawing is that each color class
has the same number of vertices as points. 
We then say that the $2$-coloring of $S$ is \emph{compatible} with the $2$-coloring of $V(G)$.

A \emph{caterpillar} is a tree in which the non-leaf vertices induce a path. 
The coloring of $V(G)$ is \emph{proper} if it doesn't create any monochromatic edge.
It is known that drawing some bicolored planar graphs on some bicolored point sets requires 
at least $\Omega(n)$ bends per edge~\cite{giacomograph},
but caterpillars can be drawn with two bends per edge~\cite{giacomograph} and one bend per
edge is enough for paths~\cite{giacomopath} and properly colored caterpillars~\cite{giacomograph}.

We restrict our attention to the proper $2$-colorings of a bipartite graph $G$. Then 
the question of embeddability of $G$ on a bicolored point set is very similar to finding a non-crossing
copy of $G$ in a complete geometric graph from which we removed edges of the two complete subgraphs
on points of the two color classes. 
The only difference is that in the latter case, we can swap colors on some connected components of $G$.
A related question was posed by Micha Perles on DIMACS Workshop on Geometric Graph Theory in 2002.
He asked what is the maximum number $h(n)$ such that if we remove arbitrary $h(n)$ edges  
from a complete geometric graph on an arbitrary set of $n$ points in general position, we
can still find a non-crossing Hamiltonian path.
\v{C}ern\'y et al.~\cite{cerny07} showed that $h(n) = \Omega(\sqrt{n})$
and also that it is safe to remove the edges of an arbitrary complete graph
on $\Omega(\sqrt{n})$ vertices and that this bound is asymptotically optimal. 
Aichholzer et al.~\cite{aichholzer10} summarize history and results of this type also for
graphs different from the path.

It is thus impossible to find a non-crossing Hamiltonian alternating (that is, properly colored) path (\emph{NHAP} for short) 
on some bicolored point sets. Kaneko et al.~\cite{kanekokanosuzuki} proved that the smallest such point set 
has $16$ points if we allow only even number $n$ of points and $13$ for arbitrary $n$. 

Several sufficient conditions are known under which a bicolored point set admits an NHAP. An NHAP exists whenever
the two color classes are separable by a line~\cite{abellanas99} or if one of them is composed of 
the points of the convex hull of $S$~\cite{abellanas99}.

The result on sets with color classes separable by a line readily implies that any $2$-colored 
set $S$ with each color class of size $n/2$ admits a non-crossing alternating path (NAP) on at least $n/2$ 
points of $S$. It is an open problem if this lower bound can be improved to $n/2+f(n)$, where $f(n)$
is unbounded (see also Chapter 9.7 of the book~\cite{brassmoserpach}). 
On the other hand, there are such $2$-colored sets admitting no NAP of length more than $\approx2n/3$
\cite{abellanas03,kynclpt}.
This upper bound is proved for certain colorings of points in convex position. The above
general lower bound $n/2$ can be slightly improved for sets in convex position~\cite{kynclpt, hajnalmeszaros}, 
the best bound is currently $n/2+\Omega(\sqrt{n})$ by Hajnal and M\'esz\'aros~\cite{hajnalmeszaros}.

The main result of this paper is that some point sets contain an NHAP for any equitable 
$2$-coloring of their points. We call such point sets \emph{$2$-color universal} for a path.

See the survey of Kaneko and Kano~\cite{kanekokano} for more results on embedding graphs on bicolored point sets.
One of the results mentioned in the survey is the possibility to embed some graphs with a fixed $2$-coloring on
an arbitrary compatibly $2$-colored point set. Let $G$ be a forest of stars with centers colored black and 
leaves white and let $S$ be a $2$-colored point set. If we map the centers of stars arbitrarily and then 
we map the leaves so that the sum of lengths of edges is minimized, then no two edges cross.

Previously, a different notion of universality was considered in the context of embedding colored graphs 
on colored point sets. A $k$-colored set $S$ of $n$ points is \emph{universal} for a class $\mathcal G$ 
of graphs if every (not necessarily proper) coloring of vertices of $G\in \mathcal G$ on $n$ vertices 
admits an embedding on $S$. Brandes et al.~\cite{brandes10+} find, for example, universal $k$-colored 
point sets for the class of caterpillars for every $k \leq 3$.

\subsection{Our Results}
A {\em convex\/} or a {\em concave chain\/} is a finite set of points in the plane
lying on the graph of a strictly convex or a strictly concave function, respectively.
A {\em double-chain} $(C_1,C_2)$ consists of a convex chain $C_1$ and a concave chain
$C_2$ such that each point of $C_2$ lies strictly below every line determined
by $C_1$ and similarly, each point of $C_1$ lies strictly above every line determined
by $C_2$ (see Fig.~\ref{f:double-chain}).
Double-chains were first considered in \cite{garcia}. 

The size of a chain $C_i$ is the number $|C_i|$ of its points.
Note that we allow different sizes of the chains $C_1$ and $C_2$. If the sizes $|C_1|$, $|C_2|$
of the chains differ by at most $1$, we say that the double-chain is \emph{balanced}.

We consider only $2$-colorings and we use {\em black\/} and {\em white\/} as the colors.
A $2$-coloring of a set $S$ of $n$ points in the plane is \emph{compatible} with a $2$-coloring of a graph $G$ 
on $n$ vertices if the number of black points of $S$ is the same as the number of black vertices in $G$. 
This implies the equality of numbers of white points and white vertices as well. 

A graph $G$ with $2$-colored vertices is \emph{embeddable} on a $2$-colored point set $S$ if the vertices 
of $G$ can be mapped to the points of $S$ so that the colors match and no two edges cross if they are drawn as 
straight-line segments.

A set $S$ of points is \emph{$2$-color universal} for a bipartite graph $G$ if for every proper $2$-coloring
of $G$ and for every compatible $2$-coloring of $S$, $G$ is embeddable on $S$.

If the properly colored path on $n$ vertices, $P_n$, can be embedded on a $2$-colored set $S$ of $n$ points, 
then we say that $S$ has an NHAP (non-crossing Hamiltonian alternating path).

A $2$-coloring of a set $S$ of $n$ points is \emph{equitable} if it is compatible with some 
proper $2$-coloring of $P_n$, that is, if the sizes of the two color classes differ by at most $1$.

\myfig{double-chain}{an equitable $2$-coloring of a double-chain $(C_1, C_2)$}

Section~\ref{s:mainthm} contains the proof of the following theorem.

\begin{theorem}\label{t:main}
Let $(C_1,C_2)$ be a double-chain satisfying $|C_i| \ge 1/5 (|C_1|+|C_2|)$ for $i=1,2$.
Then $(C_1,C_2)$ is $2$-color universal for $P_n$, where $n=|C_1|+|C_2|$. Moreover, an NHAP on an equitably 
colored $(C_1,C_2)$ can be found in linear time.
\end{theorem}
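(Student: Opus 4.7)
The plan is to build the NHAP explicitly by decomposing it into alternating subpaths on each chain, joined by a small number of \emph{bridges} (edges between $C_1$ and $C_2$). Let $b_i,w_i$ denote the number of black and white points on chain $C_i$; the equitable hypothesis gives $|b_1+b_2-w_1-w_2|\le 1$. The construction rests on two ingredients. The first is a single-chain lemma: any 2-coloring of a convex chain whose two color classes differ by at most one admits a non-crossing alternating Hamiltonian path, and one can prescribe a starting endpoint provided it has the correct color. I would prove this by induction, pairing the leftmost point with a suitable opposite-color point and recursing on the remainder (a ``zigzag into the interior''). The second ingredient is a geometric observation: every segment spanned by two points of $C_1$ lies strictly above every point of $C_2$, and symmetrically for $C_2$, so chain-internal edges cannot cross bridges or edges of the other chain; moreover two bridges with endpoints in $C_1$ and $C_2$ avoid crossing precisely when they are nested in the natural left-to-right orderings of the two chains. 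Consequently an NHAP can be encoded as a sequence of single-chain segments joined by a nested family of bridges.

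The main construction then proceeds by case analysis on the color imbalances of the two chains. In the balanced subcase, where each chain already admits an NHAP of its own by the one-chain lemma, a single carefully placed bridge merges the two one-chain paths at endpoints of matching colors. The harder subcase is when one chain, say $C_2$, is heavily skewed toward one color: the path must then repeatedly cross to $C_2$ and back, each round trip consuming a pair of opposite-color points of $C_1$. Here the hypothesis $|C_i|\ge n/5$ enters crucially: it guarantees that when $|C_2|$ is as small as $n/5$, the longer chain $C_1$ still contains at least $3n/10$ points of each color (since at most $n/2$ points of either color exist in total), which is exactly what is needed to supply the bridges while keeping the unused part of $C_1$ balanced enough for the one-chain lemma to finish the job.

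The main obstacle I anticipate is the combinatorial bookkeeping in the skewed subcase: scheduling the bridges in a nested order and simultaneously partitioning each chain into balanced alternating blocks whose endpoints match the bridge colors, with a handful of boundary cases to handle separately (for instance $C_2$ being monochromatic, or the two chains having mismatched parities). The linear-time claim then follows from the fact that, once the color counts determine the subcase, the construction is a direct concatenation of subpaths and bridges with no backtracking, and the one-chain lemma is itself implemented in linear time by the explicit zigzag it produces.
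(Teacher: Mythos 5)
Your proposal hinges on a ``single-chain lemma'' asserting that every equitable $2$-coloring of a convex chain admits a non-crossing alternating Hamiltonian path (even with a prescribed endpoint). This lemma is false, and this is a fatal gap. A convex chain is a point set in convex position, and it is known (and cited in the introduction of this paper) that there are equitable $2$-colorings of points in convex position whose longest non-crossing alternating path has only about $2n/3$ points, so in particular no NHAP exists. In fact, this very paper contains an explicit counterexample: by Theorem~\ref{t:circle_covering}, a single chain colored periodically with the pattern $2$ black, $4$ white, $6$ black, $4$ white (an equitable coloring) cannot be covered by fewer than $|C_1|/28$ pairwise disjoint non-crossing alternating paths, so for $|C_1|\ge 28$ one path does not suffice. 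Consequently the proposed induction (``pair the leftmost point with a suitable opposite-color point and recurse'') cannot work in general, and both your balanced subcase (one bridge joining two one-chain NHAPs) and your skewed subcase collapse, since each relies on producing Hamiltonian alternating paths inside pieces of a single chain with prescribed endpoints.

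Two further points. First, your geometric observation is also not correct as stated: an edge joining two points $p<q$ of $C_1$ can be crossed by a bridge emanating from a point of $C_1$ outside the interval $[p,q]$ (such a point lies above the line $pq$ while the $C_2$-endpoint lies below it), so which points are allowed to send bridges must be controlled carefully. Second, the structure of the actual proof is quite different from ``few bridges plus Hamiltonian pieces'': the paper covers each chain by many pairwise non-crossing alternating paths (hedgehogs built on bodies), where the minor points serving as spines absorb exactly the color imbalance $\Delta$ (Lemma~\ref{l:hedgehogs}), and then joins roughly $2r$ such pieces by edges between the chains; the hypothesis $|C_i|\ge n/5$ is used only at the end, to guarantee $|C_2|\ge |C_1|/4\ge r_1-1$ after contracting singletons, so that the short chain can be covered by $r_1-1$ hedgehogs (Lemma~\ref{l:paths}) and plugged into the no-singleton construction (Lemma~\ref{l:nosingletons}). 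If you want to salvage your plan, you would need to replace the false single-chain lemma by a statement of this weaker type: not a Hamiltonian path on one chain, but a covering of a chain by a controlled number of non-crossing alternating paths with prescribed endpoint types.
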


Note that this doesn't always hold if we don't require the coloring of the path to be proper. For example, 
if we color first two vertices of $P_4$ black and the other two white, then it cannot be embedded on the 
double-chain with both chains of size $2$ if the two black points are the top-left one and the bottom-right one.

In Section~\ref{s:nopath}, we show that double-chains with highly unbalanced sizes of chains
do not admit an NHAP for some equitable $2$-colorings.

\begin{theorem} \label{t:nopath}
Let $(C_1,C_2)$ be a double-chain, and let $C_1$ be periodic with the following period of length 16:
2 black, 4 white, 6 black and 4 white points.
If $|C_1| \ge 28(|C_2|+1)$, then $(C_1,C_2)$ has no NHAP.
\end{theorem}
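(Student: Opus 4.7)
The plan is a proof by contradiction. Assume that an NHAP $P$ on $(C_1,C_2)$ exists and exploit the periodic structure of $C_1$ to force $P$ to use more $C_2$-vertices than are available.

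First, I would observe that the number of edges of $P$ with one endpoint in each chain (``cross edges'') is at most $2|C_2|$, since every vertex of $C_2$ has path-degree at most $2$. Hence the restriction of $P$ to $C_1$ is a disjoint union of at most $2|C_2|+1$ alternating ``arcs'', each a non-crossing sub-path on the convex chain $C_1$. Because $C_1$-edges of $P$ are chords of $C_1$ lying above the chain while excursions of $P$ through $C_2$ act as ``chords below'' $C_1$, the two families are non-crossing independently, and the $C_1$-edges alone form a nested chord system on a convex chain (so that two such chords $(a,b)$, $(c,d)$ with $a<b$, $c<d$ cross iff $a<c<b<d$ or $c<a<d<b$).

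The heart of the proof is a local counting lemma: each full period of $C_1$ forces $P$ to consume at least a fixed positive amount of cross-edge capacity. The six consecutive blacks of the $6$-block have no white of $C_1$ among them, so each has two white path-neighbors; for those neighbors to lie only in $C_1$ one must use chords from the block to the four whites on each flank, and the nesting constraint combined with the analogous constraints imposed by the $2$-block and the two $4$-white blocks in the same period, together with the requirement that the chord graph induced on one period form a single connected path rather than a union of cycles or a crossing configuration, is infeasible. Since borrowing whites from neighboring periods only shifts the difficulty from one period to the next, at least a constant amount of cross-edge usage is forced per period. Summing this lower bound over the $\lfloor |C_1|/16\rfloor$ periods and comparing with the budget $2|C_2|$ yields an inequality of the form $|C_1|\le c(|C_2|+1)$ with $c<28$, contradicting $|C_1|\ge 28(|C_2|+1)$.

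The main obstacle is proving the local lemma with the right constant. Qualitatively the obstruction is clear: any non-crossing alternating chord system servicing a $6$-block decomposes into nested bipartite matchings between the block and the surrounding whites, and every such matching either fails to form a single path (it creates a cycle) or forces a crossing with chords coming from the adjacent $2$-block. Turning this into the sharp constant $28$ requires a careful quantitative analysis of how cross edges can be ``shared'' between neighbouring periods and of how the boundary periods of $C_1$ behave differently from the interior ones, and this is where the bulk of the casework of the proof will live.
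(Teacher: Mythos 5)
Your first step coincides with the paper's: assume an NHAP exists and consider the maximal subpaths of it that lie entirely in $C_1$. Even there your budget is lossy. Instead of counting cross edges (at most $2|C_2|$, giving up to $2|C_2|+1$ subpaths), observe that every maximal excursion of the NHAP into $C_2$ consumes at least one point of $C_2$, so the number $t$ of maximal $C_1$-subpaths satisfies $t\le |C_2|+1$. With your weaker bound, even a per-period forcing argument with the natural constants would only yield $|C_1|\le c\,(2|C_2|+1)$, so the local constant you would have to prove is twice as strong as what the paper needs; as written, your plan cannot reach the stated threshold $28(|C_2|+1)$.

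The genuine gap, however, is the ``local counting lemma,'' which you state but do not prove, and which in its literal per-period form is not true as a local statement. A maximal $C_1$-subpath need not span one interval: it can have many bases scattered over many periods, joined by long edges, with other subpaths nested inside the gaps between its bases. Consequently a given period can be covered entirely by pieces of paths whose endpoints --- and hence whose contacts with $C_2$ --- lie far away, so no fixed amount of cross-edge capacity is attributable to that period locally; the forcing must be amortized globally, and your remark that ``borrowing whites from neighboring periods only shifts the difficulty'' asserts exactly this amortization without an argument. The paper's proof supplies it: it passes to the cyclic version of $C_1$ (Theorem~\ref{t:circle_covering}), decomposes each subpath into zig-zags, rainbows and branches, shows the branch graph is a forest (Observation~\ref{o:branches}), bounds the weight (number of runs spanned) of a leaf by $3.5$ and of a non-leaf path with $k$ branches by $3.5k+3.5$ (Lemmas~\ref{l:leaf} and~\ref{l:path}), and, crucially, proves that a path with many bases forces correspondingly many leaf paths (Lemma~\ref{l:leaves}), which gives $|{\cal B}|\le |{\cal P}|-2$ and hence total weight less than $7t$; since the given coloring has $|C_1|/4$ runs, this yields $t>|C_1|/28$ and the constant $28=4\cdot 7$. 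None of this quantitative content --- which is the entire substance of the theorem --- appears in your proposal, and your concluding claim that the bookkeeping produces some $c<28$ is not substantiated.
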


An \emph{equitable coloring} of a graph is a coloring where the sizes of any two color classes differ by at most $1$.
A bipartite graph is \emph{equitable} if it admits a proper equitable $2$-coloring.

Section~\ref{sec:emb} mainly studies other graphs for which the balanced double-chain is $2$-color universal.

\begin{theorem}
\label{thm:eqonbal}
The balanced double-chain is $2$-color universal for equitable caterpillars with at least as many leaves 
as non-leaf vertices. 
\par If a forest of stars is $2$-colored equitably and properly, then it can be embedded on every 
compatibly $2$-colored balanced double-chain.
\par If the balanced double chain is $2$-color universal for an equitable bipartite graph $G$, then
$G$ is a forest of caterpillars.
\end{theorem}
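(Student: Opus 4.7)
The theorem has three parts---two positive embedding statements (for equitable caterpillars with many leaves, and for properly equitably colored star forests) and one necessity statement for equitable bipartite graphs. I would prove them in this order, with the necessity being the main obstacle.

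For the caterpillar part, write the spine as $s_1, \ldots, s_k$ and let $s_i$ carry $\ell_i$ leaves, so the hypothesis gives $\sum_i \ell_i \geq k$. I would embed the caterpillar by processing the spine from left to right, placing $s_i$ together with its leaves in a contiguous block of points drawn from the two chains so that the remaining points still form a (nearly) balanced double-chain whose color counts match the unembedded portion of the caterpillar. The many-leaves assumption is precisely what supplies the color-class slack needed to maintain this invariant at every step; the verification is a short case analysis on the color of $s_i$ and on which chain the block is carved from.

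For the forest-of-stars part, I would split $F = F_B \sqcup F_W$ according to each star's center color. Each $F_X$ has uniform center color and opposite leaf color, so the Kaneko--Kano embedding recalled in the introduction embeds it on any compatibly 2-colored point set. The task is then to partition the balanced double-chain into two regions---one per subforest---matching color counts and arranged so the two subembeddings can be drawn without mutual crossings (for example, separated by a line). I would seek such a partition via a staircase cut with split parameters on each chain, using a two-parameter discrete intermediate-value argument; the balanced chain sizes provide the flexibility to hit the required counts, although some coloring patterns will require additional case work.

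The main obstacle is the necessity. Contrapositively, suppose $G$ is equitable bipartite but not a forest of caterpillars; I must exhibit an equitable proper 2-coloring of $G$ and a compatible 2-coloring of a balanced double-chain of size $n = |V(G)|$ on which $G$ has no embedding. I would exploit that the points of the double-chain lie in convex position and use the block coloring in which one color class occupies a contiguous arc of the convex hull (say, the whole upper chain together with the rightmost portion of the lower chain), so that every subset of points inherits a block color pattern in its own induced cyclic order. If $G$ contains an even cycle $C_{2k}$, then any planar drawing forces the $2k$ used points to form a convex subpolygon whose cyclic color sequence must alternate; the block coloring rules this out. If instead $G$ is a forest containing a non-caterpillar tree, then it contains the spider $T_{2,2,2}$ on seven vertices, with central vertex $v$ and three length-two arms $v u_i w_i$. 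In the block coloring the seven spider points appear in cyclic order as four points of one color followed by three of the other; the three chords from $v$ to its $u$-neighbors cut the resulting convex heptagon into four arcs, and a case analysis on the four possible positions of $v$ among the four same-color points shows that the three $w$-vertices end up so unevenly distributed among the arcs that some arc has strictly more $w$-vertices than the unique adjacent $u$-neighbor can reach without a chord crossing the fan at $v$. This spider case analysis is the nontrivial geometric step; it extends from the spider subgraph to any larger $G$ because every seven-point subset of the block-colored convex hull inherits exactly the same block structure, so the same obstruction applies regardless of how the remaining vertices of $G$ are placed.
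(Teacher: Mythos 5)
Your necessity argument rests on the premise that the points of the double-chain are in convex position, and they are not: only the four extreme points of the two chains lie on the convex hull, while all interior points of each chain lie strictly inside it. Consequently neither of your key steps is justified for this point set: a non-crossing cycle drawn on a double-chain need not visit its points in any cyclic ``hull order'' (so no alternation of a block coloring is forced), and the seven points hosting the spider $K^+_{1,3}$ do not form a convex heptagon, so the arc-counting case analysis has no basis. (Note also that since $G$ is equitable and the chains are balanced, a compatible coloring essentially forces your ``block'' to be the two monochromatic chains.) The paper's proof of this part uses exactly the monochromatic-chain coloring and then argues directly in the double-chain geometry: all edges then join the two chains, the two cycle edges incident to exactly one endpoint of the leftmost cycle edge must cross, and for $K^+_{1,3}$ with root at $\omega_1$ and middle vertices at $\beta_1,\beta_2,\beta_3$, the edge from $\beta_2$ to its leaf must cross $\omega_1\beta_1$ or $\omega_1\beta_3$. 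So while you identified the right obstructions (a cycle or a subdivided $3$-star), the geometric core of your argument is unsound and would need to be replaced by such a direct crossing argument.

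The star-forest part also has a genuine gap. A line achieving prescribed counts in \emph{both} colors need not exist, and even granting that your staircase cut (prefixes of both chains versus the complementary suffixes) yields two mutually non-crossing regions -- which you did not verify -- the required counts can be unreachable for every choice of split parameters. Take $|C_1|=|C_2|=6$ with $C_1$ colored $BBBWWW$ and $C_2$ colored $WWWBBB$ (left to right), and let $G$ consist of one black-centered and one white-centered star on $6$ vertices each: every prefix pair of total size $6$ contains exactly $3$ black points, but your two regions would need $1$ black $+$ $5$ white and $5$ black $+$ $1$ white (a single separating line fails here as well). So the two center-color subforests must interleave, and the plan of two separated minimum-length-matching embeddings cannot work; this is not ``additional case work'' but a structural failure. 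The paper avoids it by adding edges to turn the star forest into a properly colored caterpillar with at most $\lfloor n/2\rfloor$ non-leaf vertices (chaining the star centers into an alternating path, with a counting argument that uses equitability) and then invoking the caterpillar case. Finally, your caterpillar sketch is too vague to check: it does not say how consecutive ``blocks'' spread over two chains avoid crossing one another, nor where the hypothesis on the number of leaves actually enters; the paper's proof is a concrete two-phase greedy in which spine vertices go to major points of the chain where their color is major, their leaves to major points of the other chain, and the surplus leaves are attached to minor points by the non-crossing closest-pair matching used for the hedgehogs.
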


We also present examples of equitable bipartite planar graphs for which no set of points is $2$-color universal.

\section{Proof of Theorem~\ref{t:main}}\label{s:mainthm}

The main idea of our proof is to cover the chains $C_i$ by a special type of
pairwise non-crossing paths, so called
hedgehogs, and then to connect these hedgehogs into an NHAP by adding some
edges between $C_1$ and $C_2$.

\subsection{Notation Used in the Proof}

For $i=1,2$, let $b_i$ be the number of black points of $C_i$ and
let $w_i:=|C_i|- b_i$ denote the number of white points of
$C_i$. 

Since the coloring is equitable, we may assume that 
$b_1 \ge w_1$ and $w_2 \ge b_2$.
Then black is {\em the major color of} $C_1$ and
{\em the minor color of} $C_2$, and white is {\em the major color
of} $C_2$ and {\em the minor color of} $C_1$. Points in the major
color, i.e., black points on $C_1$ and white points on $C_2$, are
called {\em major points}. Points in the minor color are called
{\em minor points}.

Points on each $C_i$ are linearly ordered according to the $x$-coordinate.
An {\em interval\/} of $C_i$ is a sequence of consecutive points of $C_i$.
An {\em inner point\/} of an interval $I$ is any point of $I$ which is
neither the leftmost nor the rightmost point of $I$.

\myfig{hedgehog}{a hedgehog in $C_1$}

A {\em body} $D$ is a non-empty interval of a chain $C_i$ $(i=1,2)$
such that all inner points of $D$ are major.
If the leftmost point of $D$ is minor, then we call it a {\em head\/}
of $D$. Otherwise $D$ has no head. If the rightmost point of $D$
is minor, then we call it a {\em tail\/} of $D$. Otherwise $D$ has no tail.
If a body consists of just one minor point, this point is both the head
and the tail.

Bodies are of the following four types. A {\em $00$-body\/} is a body with
no head and no tail. A {\em $11$-body\/} is a body with both head
and tail. The bodies of remaining two types have exactly one endpoint major
and the other one minor. We will call the body a {\em $10$-body\/} or
a {\em $01$-body\/} if the minor endpoint is a head or a tail, respectively.

Let $D$ be a body on $C_i$. A {\em hedgehog (built on the body
$D\subseteq C_i$)\/} is a non-crossing alternating path $H$
with vertices in $C_i$ satisfying
the following three conditions: (1) $H$ contains all points of $D$, (2) $H$
contains no major points outside of $D$, (3) the endpoints of $H$ are
the first and the last point of $D$.
A hedgehog built on an $\alpha\beta$-body is an {\em $\alpha\beta$-hedgehog\/}
($\alpha, \beta=0,1$). If a hedgehog $H$ is built on a body $D$, then $D$ is
{\em the body of $H$\/} and the points of $H$ that do not lie in $D$ are
{\em spines}. Note that each spine is a minor point. All possible types of 
hedgehogs can be seen on Fig.~\ref{f:htypes} (for better lucidity, we will draw
hedgehogs with bodies on a horizontal line and spines indicated only by a
``peak'' from now on).

\myfig{htypes}{types of hedgehogs (sketch)}

On each $C_i$, maximal intervals containing only major points are called {\em runs}.
Clearly, runs form a partition of major points.
For $i=1,2$, let $r_i$ denote the number of runs in $C_i$.

\subsection{Proof in the Even Case}

Throughout this subsection, $(C_1,C_2)$ denotes a double-chain with
$|C_1|+|C_2|$ even. Since the coloring is equitable, we have $b_1+b_2=w_1+w_2$.
Set $$\Delta:=b_1-w_1=w_2-b_2.$$

First we give a lemma characterizing collections of bodies
on a chain $C_i$ that are bodies of some pairwise non-crossing
hedgehogs covering the whole chain $C_i$.

\begin{lemma}\label{l:hedgehogs}
Let $i\in\{1,2\}$. Let all major points of $C_i$ be covered by a set
${\cal D}$ of pairwise disjoint bodies. Then the bodies of ${\cal D}$ are the bodies
of some pairwise non-crossing hedgehogs covering the whole $C_i$ if and only if
$\Delta=d_{00}-d_{11}$, where $d_{\alpha\alpha}$ is the number
of $\alpha\alpha$-bodies in ${\cal D}$.
\end{lemma}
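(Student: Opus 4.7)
\noindent\emph{Plan.} The proof naturally splits into the two directions.

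For the ``only if'' direction, assume that the bodies of $\mathcal{D}$ extend to pairwise non-crossing hedgehogs $H_1,\dots,H_J$ covering $C_i$. Since each $H_j$ is alternating and contains exactly the majors of its body $D_j$, the number of minor vertices of $H_j$ is forced by the type of $D_j$ and by the number $k_j$ of majors in $D_j$: it is $k_j-1$ for a $00$-body (pure spines), $k_j$ for a $10$- or $01$-body (spines plus one minor endpoint), and $k_j+1$ for a $11$-body (spines plus two minor endpoints, with the singleton $11$-body fitting the same formula with $k_j=0$ and one minor). Summing over $j$ and using that the bodies partition the majors of $C_i$, I get that the number of minor points on $C_i$ equals (number of majors)$\,-d_{00}+d_{11}$. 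Since $\Delta$ is the difference majors$\,-\,$minors, this rearranges to $\Delta = d_{00}-d_{11}$.

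For the ``if'' direction, I will construct the hedgehogs explicitly. Order the bodies $B_1,\dots,B_J$ from left to right along $C_i$ and let $g_0,\dots,g_J$ be the runs of minor points strictly between successive bodies (and before the first / after the last body). Each body has a \emph{spine demand} $s_j$ determined by its type and number of majors, and the computation from the previous paragraph shows that the hypothesis is equivalent to $\sum_j s_j = \sum_j|g_j|$ (global demand equals global supply of free minor points). The plan is to partition each gap $g_j$ into a left portion used by $B_j$ and a right portion used by $B_{j+1}$, so that $B_j$ receives exactly $s_j$ spines altogether, and then to draw the hedgehog of $B_j$ by traversing its points in $x$-order and inserting the assigned spines according to a nesting rule: among the spines arriving from the left of $B_j$, the one closest to $B_j$ is placed in the insertion slot closest to the left end of $B_j$ in the path, and so on (with the symmetric rule on the right). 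On a convex or concave chain this produces a non-crossing alternating path, and the across-body nesting likewise prevents crossings between distinct hedgehogs, even when a spine of one body has to ``jump over'' other bodies via a long chord that nests around their hedgehogs.

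I expect the main obstacle to be establishing feasibility of the local assignment: a priori, a gap $g_j$ could contain more minor points than the joint demands of $B_j$ and $B_{j+1}$ can absorb, even though the total demand equals the total supply. My plan is either to permit a spine of $B_j$ to be separated from $B_j$ by one or more intermediate bodies (its supporting chord nesting outside their hedgehogs) and verify by a flow/matching argument that such nested assignments always exist under $\Delta=d_{00}-d_{11}$, or alternatively to induct on $|\mathcal{D}|$ by identifying, in every non-trivial configuration, a body whose hedgehog can be completed first---for example a singleton $11$-body, or a body flanked by exactly the right number of free minors---and peeling it off to a strictly smaller instance still satisfying the same identity.
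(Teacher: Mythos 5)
Your ``only if'' direction is fine and is the same counting as the paper's: a hedgehog on an $\alpha\beta$-body with $t$ majors uses $(t-1)+\alpha+\beta$ minors, and summing over the (disjoint) bodies gives $\Delta=d_{00}-d_{11}$.

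The ``if'' direction, however, is not a proof but a plan, and the part you defer is exactly the crux of the lemma. You correctly observe that a gap of free minors need not locally match the spine demands of the two bodies flanking it, so spines must sometimes be assigned to bodies far away, with their supporting edges nesting over intermediate hedgehogs; but you then only announce that ``a flow/matching argument'' or an unspecified induction (peeling off a suitable body) should handle this, without exhibiting the argument or verifying the non-crossing property for these long, nested spine chords. Neither route is carried out, and the second one in particular needs a concrete reduction step that you do not identify. The paper closes this gap with one short idea that you are missing: let $F$ be the free minor points and let $M$ be the midpoints of the segments joining consecutive major points inside a common body (these midpoints are precisely the spine slots, one per needed spine). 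The hypothesis $\Delta=d_{00}-d_{11}$ gives $|F|=|M|$, the set $F\cup M$ again lies on a convex (or concave) chain, and any two equinumerous sets on such a chain admit a non-crossing perfect matching --- match a point of $F$ to a neighboring point of $M$ in the chain order, delete both, and induct. Each matched pair $(f,m)$ then makes $f$ the spine joining the two majors that determined $m$, and the non-crossingness of the matching (together with convexity) immediately yields pairwise non-crossing hedgehogs, including the ``jump over other bodies'' situations you worried about. Until you supply an argument of this strength (or fully execute one of your two proposed alternatives), the sufficiency direction remains unproven.
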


\begin{proof}
An $\alpha\beta$-hedgehog containing $t$ major points contains
$(t-1)+\alpha+\beta$ minor points. It follows that the equality
$\Delta=d_{00}-d_{11}$ is necessary for the existence of a covering
of $C_i$ by disjoint hedgehogs built on the bodies of ${\cal D}$.

Suppose now that $\Delta=d_{00}-d_{11}$. Let $F$ be the set of minor points
on $C_i$ that lie in no body of ${\cal D}$, and let $M$ be the set of
the mid-points of
straight-line segments connecting pairs of consecutive major points
lying in the same body. It is easily checked that $|F|=|M|$.
Clearly $F\cup M$ is a convex or a concave chain. Now it is easy to prove
that there is a non-crossing perfect matching formed by $|F|=|M|$ straight-line
segments between $F$ and $M$ (for the proof, take any segment connecting
a point of $F$ with a neighboring point of $M$, remove the two points,
and continue by induction); see Fig.~\ref{f:spines}.

\myfig{spines}{a non-crossing matching of minor points and midpoints (in $C_1)$}

If $f\in F$ is connected to a point $m\in M$ in the matching,
then $f$ will be a spine with edges going from it to those two major
points that determined $m$. Obviously, these spines and edges define
non-crossing hedgehogs with bodies in ${\cal D}$ and with all the required properties.
\end{proof}

\bigskip

The following three lemmas and their proofs show how to construct an NHAP
in some special cases.

\begin{lemma}\label{l:deltalarge}
If $\Delta\ge\max\{r_1,r_2\}$ then $(C_1,C_2)$ has an NHAP.
\end{lemma}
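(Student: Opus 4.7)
The plan is to cover each chain separately by exactly $\Delta$ pairwise non-crossing $00$-hedgehogs via Lemma~\ref{l:hedgehogs}, and then to splice these $2\Delta$ hedgehogs into a single non-crossing Hamiltonian alternating path by inserting $2\Delta-1$ ``connector'' edges that zigzag between $C_1$ and $C_2$.

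First I would partition the $b_1$ major points of $C_1$ into $\Delta$ contiguous groups, each lying inside a single run. This is possible because the hypothesis $\Delta \ge r_1$ permits at least one group per run, while $\Delta \le b_1$ (automatic from $\Delta = b_1 - w_1$) permits no more than $b_1$ groups of size at least one. Each group becomes a $00$-body, yielding $d_{00} = \Delta$ and $d_{11} = 0$, so Lemma~\ref{l:hedgehogs} produces $\Delta$ pairwise non-crossing $00$-hedgehogs $H^1_1, \ldots, H^1_\Delta$ (indexed left to right by body position) that cover $C_1$. The mirror construction on $C_2$, using $r_2 \le \Delta \le w_2$, produces $\Delta$ non-crossing $00$-hedgehogs $H^2_1, \ldots, H^2_\Delta$. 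Since the endpoints of every $H^1_i$ are black and the endpoints of every $H^2_j$ are white, traversing the hedgehogs in the zigzag order $H^1_1, H^2_1, H^1_2, H^2_2, \ldots, H^1_\Delta, H^2_\Delta$, entering each at its left endpoint and exiting at its right endpoint, and joining consecutive hedgehogs by a straight segment, gives a path whose colors alternate throughout.

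The hard part will be checking that this drawing is non-crossing. As one walks along the path, both the $C_1$-endpoints and the $C_2$-endpoints of the connectors appear in weakly increasing $x$-order, so the connectors form a non-crossing system among themselves. By convexity of $C_1$, every chord of $C_1$, and in particular every edge of a $C_1$-hedgehog, lies above $C_1$; meanwhile every connector leaves $C_1$ downward into the strip containing $C_2$, which sits strictly below every line spanned by $C_1$. Hence no connector meets a $C_1$-hedgehog edge except possibly at a shared vertex, and the symmetric picture on the concave chain $C_2$ handles the $C_2$-hedgehog edges in the same way. Combined with the pairwise non-crossing property of the two hedgehog coverings (given by Lemma~\ref{l:hedgehogs}), this makes the entire construction an NHAP, so beyond the two applications of that lemma and the above geometric bookkeeping the proof is complete.
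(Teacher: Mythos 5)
Your proof is correct and follows essentially the same route as the paper: split the runs of each chain into $\Delta$ $00$-bodies (using $r_i\le\Delta\le\max(b_i,w_i)$), extend them to non-crossing hedgehogs via Lemma~\ref{l:hedgehogs}, and link the resulting $2\Delta$ hedgehogs by $2\Delta-1$ inter-chain edges in zigzag order, which is exactly what the paper does (deferring the non-crossing check to Fig.~\ref{f:lemma2}). Your explicit verification of the non-crossing of connectors is a welcome addition, though the step ruling out a connector crossing a $C_1$-chord whose $x$-span does not contain the connector's $C_1$-endpoint really needs the full double-chain condition (every point of $C_2$ lies below every line determined by $C_1$), not just ``chords lie above the chain.''
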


\begin{proof}
Let $i\in\{1,2\}$.
Since $r_i\le \Delta\le \max(b_i,w_i)$, the runs in $C_i$ may be partitioned
into $\Delta$ $00$-bodies. By Lemma~\ref{l:hedgehogs}, these $00$-bodies
may be extended to pairwise non-crossing hedgehogs covering $C_i$.
This gives us $2\Delta$ hedgehogs on the double-chain.
They may be connected into an NHAP by $2\Delta-1$ edges between the chains
in the way shown in Fig.~\ref{f:lemma2}.
\end{proof}

\myfig{lemma2}{$00$-hedgehogs connected to an NHAP}

\begin{lemma}\label{l:equalruns}
If $r_1=r_2$ then $(C_1,C_2)$ has an NHAP.
\end{lemma}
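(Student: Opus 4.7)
By Lemma~\ref{l:deltalarge} we may assume $\Delta < r$, where $r = r_1 = r_2$. The plan is to apply Lemma~\ref{l:hedgehogs} on each chain to produce exactly $r$ pairwise non-crossing hedgehogs---one per run---and then to stitch them into an NHAP by visiting them in the interleaved order $A_1, B_1, A_2, B_2, \ldots, A_r, B_r$, where $A_j$ and $B_j$ denote the $j$-th hedgehogs (from the left) on $C_1$ and $C_2$.

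To control the colours at the inter-hedgehog edges, I restrict each body on $C_1$ to type $00$ or $10$, so that every $A_j$ has a black (major) right endpoint, and each body on $C_2$ to type $00$ or $01$, so that every $B_j$ has a white (major) left endpoint. The edge $A_j B_j$ then alternates automatically. Requiring $B_j A_{j+1}$ to alternate as well yields the coupling $A_{j+1}=00 \Leftrightarrow B_j = 00$. Since in our restricted setting the condition $d_{00}-d_{11}=\Delta$ of Lemma~\ref{l:hedgehogs} reduces to $d_{00}=\Delta$ on each chain, this coupling together with the counting constraint leaves essentially two configurations: either (I) $A_1 = B_r = 00$, which is available whenever $\Delta \ge 1$ and uses only minor points from interior slots, or (II) $A_1=10$, $B_r=01$, which requires $\Delta \le r-1$ together with the availability of a minor point at the left end of $C_1$ and at the right end of $C_2$.

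The main obstacle is to show that at least one configuration is always realisable. Configuration~(I) handles the range $\Delta \ge 1$ without any end-slot requirement. For $\Delta = 0$, a counting argument (using $b_i = w_i \ge r$ since every run contains a major point) yields that each chain has at least one minor in its end slots; when these minors happen to be on the sides opposite to those needed for Configuration~(II), I apply the left--right mirrored restriction ($\{00, 01\}$ on $C_1$ and $\{00, 10\}$ on $C_2$), which swaps the end-slot roles. A few remaining sub-cases are handled by allowing an $11$-body on one chain, compensated by an extra $00$-body to preserve the counting condition, which re-routes the extension demand to an interior slot. Finally, non-crossing follows from the double-chain geometry: every internal hedgehog edge lies on the convex-hull side of its chain, while the inter-chain edges connect hedgehog endpoints in strict left-to-right order on both chains, yielding a non-crossing pattern.
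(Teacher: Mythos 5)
Your construction for $\Delta\ge 1$ is essentially the paper's own proof: one body per run, $\Delta$ bodies of type $00$ and the rest $10$ on $C_1$, the coupled pattern ($\Delta-1$ times $00$, then $01$'s, then a final $00$) on $C_2$, extension to hedgehogs via Lemma~\ref{l:hedgehogs}, and the left-to-right interleaved stitching. That part is fine. The gap is in the case $\Delta=0$. Your counting claim that ``each chain has at least one minor in its end slots'' is false: the $r$ (or more) minor points can all sit in the $r-1$ interior gaps, since a gap may contain several minors. Concretely, take $C_1$ colored $B\,W\,W\,B$ and $C_2$ colored $W\,B\,B\,W$; then $r_1=r_2=2$, $\Delta=0$, and neither chain has a minor point before its leftmost run or after its rightmost run. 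Here Configuration~(II) and its mirror are both unrealisable, and your fallback fails as well: an $11$-body needs a minor point on each side of a run, and in this example no run of either chain has a minor on its outer side, so no $11$-body exists at all. Moreover, in this example the bodies are forced ($01$ on the first run, $10$ on the second run of each chain), and the resulting four hedgehogs can only be stitched into an NHAP in an order that starts on $C_2$ and reverses some traversals --- i.e.\ outside the fixed pattern $A_1B_1A_2B_2\cdots$ with right-end-to-left-end connections that your argument commits to. So the sub-case analysis does not close.

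The paper avoids this entirely with a small trick you are missing: when $\Delta=0$, add one auxiliary \emph{major} point extending the leftmost run of $C_1$ on the left and one extending the rightmost run of $C_2$ on the right. This leaves $r_1=r_2$ unchanged and raises $\Delta$ to $1$, so the $\Delta\ge1$ construction applies; the two auxiliary points end up as the two endpoints of the resulting NHAP and can simply be deleted, yielding an NHAP of the original double-chain. If you want to salvage your write-up, replace the end-slot counting argument and the $11$-body fallback for $\Delta=0$ by this auxiliary-point reduction (or prove directly that the stitching order may be re-rooted when both outer slots are empty, which is exactly what the auxiliary points encode).
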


\begin{proof}
Set $r:=r_1=r_2$.
If $r\le\Delta$ then we may apply Lemma~\ref{l:deltalarge}.
Thus, let $r>\Delta$.

Suppose first that $\Delta\ge1$. We cover each run on each $C_i$ by a single body
whose type is as follows. On $C_1$ we take $\Delta$ $00$-bodies
followed by $(r-\Delta)$ $10$-bodies. On $C_2$ we take (from left to right)
$(\Delta-1)$ $00$-bodies, $(r-\Delta)$ $01$-bodies, and one $00$-body.
By Lemma~\ref{l:hedgehogs}, the $r$ bodies on each $C_i$ can be extended
to hedgehogs covering $C_i$. Altogether we obtain $2r$ hedgehogs. They can be connected
to an NHAP by $2r-1$ edges between $C_1$ and $C_2$ (see Fig.~\ref{f:lemma3}).

\myfig{lemma3}{an NHAP in the case $r_1=r_2>\Delta \geq 1$}

Suppose now that $\Delta=0$. We add one auxiliary major point on each $C_i$
as follows. On $C_1$, the auxiliary point extends the leftmost run on the left.
On $C_2$, the auxiliary point extends the rightmost run on the right.
This does not change the number of runs and increases $\Delta$ to $1$.
Thus, we may proceed as above. The NHAP obtained has the two auxiliary points
on its ends. We may remove the auxiliary points from the path, obtaining
an NHAP for $(C_1,C_2)$.
\end{proof}
\bigskip

A {\em singleton} $s\in C_i$ is an inner point of $C_i$ ($i=1,2$) such that
its two neighbors on $C_i$ are colored differently from $s$.

\begin{lemma}\label{l:nosingletons}
Suppose that $C_1$ has no singletons and $C_2$ can be covered by $r_1-1$
pairwise disjoint hedgehogs.
Then $(C_1,C_2)$ has an NHAP.
\end{lemma}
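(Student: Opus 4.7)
The strategy is to cover $C_1$ by $r_1$ pairwise non-crossing hedgehogs and to combine them with the given $r_1-1$ hedgehogs on $C_2$ into an NHAP of $2r_1-1$ hedgehogs linked by $2r_1-2$ inter-chain edges. Label the $C_1$ hedgehogs $I_1,\ldots,I_{r_1}$ and the $C_2$ hedgehogs $J_1,\ldots,J_{r_1-1}$ in left-to-right order, and prescribe the visit sequence $I_1,J_1,I_2,J_2,\ldots,J_{r_1-1},I_{r_1}$ with each hedgehog traversed left-to-right, so that every bridge joins the right endpoint of one hedgehog to the left endpoint of the next.

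Because $C_1$ has no singletons, every inner run has at least two major points and every inter-run block of minor points has at least two elements. Thus a body can be placed on each of the $r_1$ runs of $C_1$ and each body may be independently equipped with head and/or tail flags $(\alpha_i,\beta_i)\in\{0,1\}^2$ without any conflict between neighbouring bodies. Since majors on $C_1$ are black and majors on $C_2$ are white, the requirement that each bridge properly alternates colours forces
\[
\beta_i^{(1)}=\alpha_i^{(2)},\qquad \alpha_{i+1}^{(1)}=\beta_i^{(2)}\qquad (i=1,\ldots,r_1-1),
\]
leaving only $\alpha_1^{(1)}$ and $\beta_{r_1}^{(1)}$ as free parameters.

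It remains to verify the arithmetic condition $d_{00}^{(1)}-d_{11}^{(1)}=\Delta$ of Lemma~\ref{l:hedgehogs}. The key observation is the identity $(1-x)(1-y)-xy=1-x-y$ for $x,y\in\{0,1\}$, which says that a body with flags $(x,y)$ contributes exactly $1-x-y$ to $d_{00}-d_{11}$. Summing over all $C_1$ bodies and substituting the compatibility equations yields
\[
d_{00}^{(1)}-d_{11}^{(1)} \;=\; r_1-\alpha_1^{(1)}-\beta_{r_1}^{(1)}-\sum_{i=1}^{r_1-1}\bigl(\alpha_i^{(2)}+\beta_i^{(2)}\bigr).
\]
Applying Lemma~\ref{l:hedgehogs} on $C_2$ gives $\sum_{i=1}^{r_1-1}(\alpha_i^{(2)}+\beta_i^{(2)})=d_1^{(2)}+2d_2^{(2)}=r_1-1-\Delta$, so the condition collapses to $\alpha_1^{(1)}+\beta_{r_1}^{(1)}=1$. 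This is precisely the parity constraint coming from the fact that $n$ is even and the two NHAP endpoints lie on $C_1$, and it can be met whenever at least one extreme endpoint of $C_1$ is minor. Lemma~\ref{l:hedgehogs} then delivers the $r_1$ non-crossing hedgehogs on $C_1$, and the convex-above/concave-below separation of the chains, together with the left-to-right ordering of the bridges, makes the $2r_1-2$ inter-chain edges pairwise non-crossing.

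The principal obstacle is the degenerate case in which both the leftmost and the rightmost point of $C_1$ are major; then $\alpha_1^{(1)}=\beta_{r_1}^{(1)}=0$ is forced and the equation $\alpha_1^{(1)}+\beta_{r_1}^{(1)}=1$ has no solution with $r_1$ bodies of the prescribed form. I would resolve this by exploiting the freedom in the $C_2$ cover: locally rearranging a minor block of $C_2$ (for instance, carving off a $11$-body from a minor block while fusing a neighbouring minor into an adjacent body) flips one of the prescribed flags $\alpha_i^{(2)}$ or $\beta_i^{(2)}$ while preserving the $C_2$ hedgehog count $r_1-1$, and this shifts the parity of the right-hand side enough to restore solvability.
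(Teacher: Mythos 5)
Your decomposition is essentially the paper's: one body per run of $C_1$, head/tail flags dictated by the given hedgehogs on $C_2$, and an alternating left-to-right concatenation; your reduction of the counting condition of Lemma~\ref{l:hedgehogs} to $\alpha_1^{(1)}+\beta_{r_1}^{(1)}=1$ is also correct. The genuine gap is the degenerate case you flag at the end, and the repair you sketch cannot work. For \emph{any} family of pairwise disjoint hedgehogs covering all of $C_2$, counting minor against major points gives $\sum_j(\alpha_j^{(2)}+\beta_j^{(2)})=(\text{number of hedgehogs})-\Delta$, so with exactly $r_1-1$ hedgehogs this sum is frozen at $r_1-1-\Delta$; since your solvability condition involves the $C_2$ flags only through this sum, no rearrangement of the $C_2$ cover can change it (and the particular surgery you propose either changes the number of hedgehogs or destroys the necessary equality $d_{00}-d_{11}=\Delta$ on $C_2$). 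The bad case is not vacuous: take $C_1$ to be, left to right, two black, two white, two black, two white, two black points, and $C_2$ to be two white, one black, two white, one black. Then $r_1=3$, $\Delta=2$, $C_1$ has no singletons, and $C_2$ is covered by $r_1-1=2$ hedgehogs, which are necessarily both $00$-hedgehogs; both extreme points of $C_1$ are major, so all flags on $C_1$ are forced to $0$ and $d_{00}-d_{11}=3\neq\Delta$. Indeed no NHAP decomposing into one hedgehog per run of $C_1$ exists here at all (its two endpoints would both be black), so the obstruction is to your framework, not merely to your bookkeeping.

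The paper removes this obstruction with a device absent from your proposal: it adds an auxiliary major point $\sigma$ extending the leftmost run of $C_1$, declares $D_1$ headless and $D_r$ tailless, and lets the remaining flags be dictated by the $C_2$ hedgehogs exactly as you do. Adding $\sigma$ raises $b_1-w_1$ by one, so the required equality $d_{00}-d_{11}=\Delta+1$ on the enlarged chain holds automatically whatever the parity at the ends of $C_1$ (the paper verifies this indirectly: it permits adding or removing minor points and then shows a posteriori, by comparing black and white points on the resulting alternating path, that nothing was actually added or removed). The NHAP obtained starts at $\sigma$, and deleting $\sigma$ yields the desired NHAP on $(C_1,C_2)$; its first piece then begins at a spine rather than at a body endpoint, which is exactly the flexibility your rigid ``one hedgehog per run'' decomposition lacks. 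To complete your argument you need this auxiliary-point trick or an equivalent relaxation (allowing the extreme piece of the path to start at a spine), not a modification of the $C_2$ cover.
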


\begin{proof}
For simplicity of notation, set $r:=r_1$.
We denote the $r-1$ hedgehogs on $C_2$ by $P_1,P_2,\dots,P_{r-1}$
in the left-to-right order in which the bodies of these hedgehogs appear
on $C_2$. For technical reasons, we enlarge
the leftmost run of $C_1$ from the left by an auxiliary major point $\sigma$.

Our goal is to find $r$ hedgehogs $H_1,H_2,\dots,H_{r}$
on $C_1\cup\{\sigma\}$ such that they may be connected with the hedgehogs
$P_1,P_2,\dots,P_{r-1}$ into an NHAP. For each $j=1,\dots,r$, the body 
of the hedgehog $H_j$ will be denoted by $D_j$. For each $j=1,\dots,r$,
$D_j$ covers the $j$-th run of $C_1\cup\{\sigma\}$ (in the left-to-right order).
We now finish the definition of the bodies $D_j$ by specifying
for each $D_j$ if it has a head and/or a tail.
The body $D_1$ is without head. For $j>1$, $D_j$ has a head if and
only if $P_{j-1}$ has a tail. The last body $D_r$ is without tail and 
$D_j,j<r$, has a tail if and only if $P_j$ has a head.

\myfig{lemma4}{an NHAP in the case of no singletons on $C_1$}

It follows from Lemma~\ref{l:hedgehogs} that we
may add or remove some minor points on $C_1\cup\{\sigma\}$
so that $D_1,\dots,D_r$ can then be extended to pairwise
non-crossing hedgehogs $H_1,\dots,H_r$ covering the ``new'' $C_1$.
More precisely, there is a double-chain $(C_1',C_2)$ such that
$D_1,\dots,D_r$ can be extended to pairwise
non-crossing hedgehogs $H_1,\dots,H_r$ covering $C_1'$, where
either $C_1'=C_1\cup\{\sigma\}$ or $C_1'$ is obtained from $C_1\cup\{\sigma\}$
by adding some minor (white) points on the left of $C_1\cup\{\sigma\}$ (say)
or $C_1'$ is obtained from $C_1\cup\{\sigma\}$ by removal of some minor (white)
points lying in none of the bodies $D_1,\dots,D_r$.
Then the concatenation $H_1P_1H_2P_2\cdots H_{r-1}P_{r-1}H_r$
shown in Fig.~\ref{f:lemma4} gives an NHAP on $(C_1',C_2)$.
This NHAP starts with the point $\sigma$. Removal
of $\sigma$ from it gives an NHAP $P$
for the double-chain $(C_1'\setminus\{\sigma\},C_2)$. The endpoints of $P$
have different colors. Thus, $P$ covers the same number
of black and white points. Black points on $P$ are the $(|C_1|+|C_2|)/2$
black points of $(C_1,C_2)$. Thus, $P$ covers exactly $|C_1|+|C_2|$
points. It follows that $|C_1'\setminus\{\sigma\}|=|C_1|$ and thus
$C_1'\setminus\{\sigma\}=C_1$.
The path $P$ is an NHAP on the double-chain $(C_1,C_2)$.
\end{proof}

\bigskip

The following lemma will be used to find a covering needed in 
Lemma~\ref{l:nosingletons}.

\begin{lemma}\label{l:paths}
Suppose that $|C_i|\geq k$, $r_i\leq k$ and $\Delta\leq k$ for some $i\in
\{1, 2\}$ and for some integer $k$. Then $C_i$ can be covered by $k$ pairwise
disjoint hedgehogs.
\end{lemma}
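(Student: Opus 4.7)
By Lemma~\ref{l:hedgehogs}, the problem reduces to partitioning the major points of $C_i$ together with some of its minor points into $k$ pairwise disjoint bodies $\mathcal{D}$ with $d_{00}-d_{11}=\Delta$. My plan is to start from the \emph{canonical} partition---one whole run per $00$-body, producing $r_i$ bodies with $d_{00}-d_{11}=r_i$---and to reach the target $(k,\Delta)$ by applying a counted sequence of three elementary modifications: (I) splitting a run-body of size at least $2$ into two $00$-bodies, which changes $(\#\mathrm{bodies},\,d_{00}-d_{11})$ by $(+1,+1)$; (II) adjoining an adjacent unused minor point as a head or tail of an existing body, which changes it by $(0,-1)$; and (III) introducing a single unused minor point as a new $11$-body, which changes it by $(+1,-1)$.

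Let $a,b,c$ denote the number of moves of types (I), (II), (III), and let $M$ and $\mu$ be the numbers of major and minor points of $C_i$ (so $\Delta=M-\mu$). The requirements $a+c=k-r_i$ and $a-b-c=\Delta-r_i$ yield $a=\tfrac12(k+\Delta-2r_i+b)$ and $c=\tfrac12(k-\Delta-b)$. I would set $b:=\max\{\varepsilon,\,2r_i-k-\Delta\}$, with $\varepsilon\in\{0,1\}$ chosen to match the parity of $k+\Delta$. Using $r_i\le k$, $\Delta\le k$, and $k+\Delta\le 2M$ (which follows from $k\le|C_i|$), one checks that $b$ has the correct parity and lies in $[0,\,2M-k-\Delta]$, so $a$ and $c$ are non-negative integers and in addition $a\le M-r_i$. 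Hence the $a$ splits fit within the $M$ available majors, and---choosing every $11$-body to consist of a single minor---the total minor consumption $b+c$ is at most $\mu$, within the available supply.

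The step I expect to be the main obstacle is the geometric placement of the $b$ heads/tails and the $c$ single-minor $11$-bodies in the $r_i+1$ gaps of $C_i$. Each interior gap contains at least one minor, so it can host one head, one tail, or one size-$1$ $11$-body; a gap of size $\ge 2$ can host more. The delicate cases occur when $C_i$ begins or ends with a major (so an outer gap is empty), when an interior gap of size $1$ is called on to supply both a tail and a head, and when the parity correction forces $b=1$ so that a single head or tail must be placed at a run whose neighbouring gap still has a free minor. A case-heavy but routine check, using $|C_i|\ge k$ to supply the needed slack (and, if necessary, trading a split for a $11$-body by shifting $b$ by $2$), shows that the placement can always be carried out; combining this with Lemma~\ref{l:hedgehogs} yields the desired covering of $C_i$ by $k$ pairwise disjoint hedgehogs.
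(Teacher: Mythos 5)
Your reduction is the same as the paper's: by Lemma~\ref{l:hedgehogs} it suffices to exhibit $k$ pairwise disjoint bodies covering all major points of $C_i$ with $d_{00}-d_{11}=\Delta$, and your arithmetic for the move counts $a,b,c$ (splits, heads/tails, single-minor $11$-bodies) is consistent: the stated range for $b$ is nonempty, the parity can be matched, $a\le M-r_i$, and $b+c\le\mu$. The difference is organizational: you refine the one-body-per-run partition top-down, while the paper merges single-point bodies bottom-up, maintaining $d_{00}-d_{11}=\Delta$ throughout.

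However, there is a genuine gap: the step you yourself flag as ``the main obstacle''---verifying that the $b$ heads/tails and $c$ single-minor bodies can actually be placed---is precisely the substance of Lemma~\ref{l:paths}, and you assert it (``a case-heavy but routine check'') rather than prove it. A head or tail can only be the minor point immediately adjacent to an end of a run, an interior gap consisting of a single minor can serve at most one of the three roles (tail of the run to its left, head of the run to its right, single-point $11$-body), middle segments of a split run can take no head or tail at all, and the parity correction $\varepsilon=1$ forces one head/tail even when none is arithmetically required, so one must check a slot exists (e.g.\ when $r_i=1$ and $\mu=0$ this would fail, and only the parity of $k+\Delta$ saves you). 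What needs to be shown is that the forced minimum $b=\max\{\varepsilon,\,2r_i-k-\Delta\}$ never exceeds the number of available slots, namely $\sum_{\text{interior gaps}}\min(g_j,2)$ plus one for each nonempty outer gap; this requires exactly the kind of counting the paper does via its greedy merging rule (``keep as many single-point $11$-bodies adjacent to $00$-bodies as possible'') and its terminal case analysis ($|\mathcal{D}|=k$, $|\mathcal{D}|=k+1$, or $d_{00}=r_i$). The claim is true and your framework can be completed along these lines, but as written the proof of the lemma's essential feasibility statement is missing.
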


\begin{proof}
The idea of the proof is to start with the set $\mathcal{D}$ of $|C_i|$
bodies, each of them being a single point, and then gradually decrease the 
number of bodies in $\mathcal{D}$ by joining some of the bodies together.
We see that $\Delta=d_{00}-d_{11}$, where $d_{\alpha\alpha}$ is the number of
$\alpha\alpha$-bodies in $\mathcal{D}$. If we join two neighboring $00$-bodies to one
$00$-body and withdraw a single-point $11$-body from $\mathcal{D}$ (to let the
minor point become a spine) at the same time, the difference between the number
of $00$-bodies and the number of $11$-bodies remains the same and
$|\mathcal{D}|$ decreases by two. We can reduce $|\mathcal{D}|$ by one
while preserving the difference $d_{00}-d_{11}$
by joining a $00$-body with a neighboring single-point
$11$-body into a $01$- or a $10$-body. Similarly we can join a $01$- or
a $10$-body with a neighboring (from the proper side) single-point $11$-body
into a new $11$-body to decrease $|\mathcal{D}|$ by one as well. When we are
joining two $00$-bodies, we choose the single-point $11$-body to remove in such
a way to keep as many single-point $11$-bodies adjacent to $00$-bodies as
possible. This guarantees that we can use up to $r_i$ of them for heads and
tails.

We start with joining neighboring $00$-bodies and we do this as long as
$|\mathcal{D}|>k+1$ and $d_{00}>r_i$. Note that by the assumption 
$\Delta\leq k$, we will have enough single-point $11$-bodies to do that.
When we end, one of the following conditions holds: $|\mathcal{D}|=k$,
$|\mathcal{D}|=k+1$ or $d_{00}=r_i$. In the first case we are done. If
$|\mathcal{D}|=k+1$, we just add one head or one tail (we can do this since
$d_{00}+d_{11}=|\mathcal{D}|=k+1 \geq d_{00}-d_{11}+1$, which implies
$d_{11}>0$). If $d_{00}=r_i$, then each run is covered by just one $00$-body.
We need to add $|\mathcal{D}|-k$ heads and tails. We have enough
single-point $11$-bodies to do that since 
$d_{11}=|\mathcal{D}|-d_{00}=|\mathcal{D}|-r_i \geq |\mathcal{D}|-k$.
On the other hand, $r_i-d_{11}=\Delta \geq 0$, so the number of heads and
tails needed is at most $r_i$. Therefore, all the single-point $11$-bodies are
adjacent to $00$-bodies and we can use them to form heads and tails.

In all cases we get a set $\mathcal{D}$ of $k$ bodies. Now we can apply
Lemma~\ref{l:hedgehogs} to obtain $k$ pairwise disjoint hedgehogs covering
$C_i$.
\end{proof}

\bigskip

By a {\em contraction\/} we mean removing a singleton with both its
neighbors and putting a point of the color of its neighbors in its place
instead. It is easy to verify that if there is an NHAP in the new
double-chain obtained by this contraction, it can be expanded to an NHAP
in the original double-chain.

Now we can prove our main theorem in the even case.

\bigskip

\begin{pfoftmain}
Without loss of generality we may assume that $r_1 \geq r_2$. In the case
$r_1=r_2$, we get an NHAP by Lemma~\ref{l:equalruns}. In the case
$\Delta\ge r_1$, we get an NHAP by Lemma~\ref{l:deltalarge}. Therefore,
the only case left is $r_1>r_2$, $r_1>\Delta$.

If there is a singleton on $C_1$, we make a contraction of it. By this we
decrease $r_1$ by one and both $r_2$ and $\Delta$ remain unchanged. If
now $r_1=r_2$ or $r_1=\Delta$, we again get an NHAP, otherwise
we keep making contractions until one of the previous cases appears or there
are no more singletons to contract.

If there is no more singleton to contract on $C_1$ and still $r_1>r_2$ and
$r_1>\Delta$, we try to cover $C_2$ by $r_1-1$ pairwise disjoint paths.
Before the contractions, $|C_2| \geq |C_1|/4$ did hold and by the
contractions we could just decrease $|C_1|$, therefore it still holds.

All the maximal intervals on the chain $C_1$ (with possible exception of the
first and the last one) have now length at least two, which implies that
$r_1 \leq |C_1|/4 +1$.
Hence $|C_2| \geq |C_1|/4 \geq r_1 -1$, so we can create $r_1-1$
pairwise disjoint hedgehogs covering $C_2$ using Lemma~\ref{l:paths}.
Then we apply Lemma~\ref{l:nosingletons} and expand the NHAP obtained
by Lemma~\ref{l:nosingletons} to an NHAP on the original double-chain.

There is a straightforward linear-time algorithm for finding an NHAP
on $(C_1,C_2)$ based on the above proof.
\qed
\end{pfoftmain}



\subsection{Proof in the Odd Case}\label{s:odd}
In this subsection we prove Theorem \ref{t:main} for the case when $|C_1|+|C_2|$
is odd. We set $\Delta=w_2-b_2$ and proceed similarly as in the even case.
On several places in the proof we will add one auxiliary point
$\omega$ to get the even case (its color will be chosen to equalize
the numbers of black and white points). We will be able to apply one of the
Lemmas \ref{l:deltalarge}--\ref{l:nosingletons} to obtain an NHAP.
The point $\omega$ will be at some end of the NHAP and by removing
$\omega$ we obtain an NHAP for $(C_1, C_2)$.

Without loss of generality we may assume that $r_1 \geq r_2$. In the case
$r_1=r_2$, we add an auxiliary major point $\omega$, which is placed either as
the left neighbor of the leftmost major point on
$C_1$ or as the right neighbor of the rightmost major point on $C_2$. Then
we get an NHAP by Lemma \ref{l:equalruns} and the removal of $\omega$ gives
us an NHAP for $(C_1, C_2)$. 

In the case $\Delta\ge r_1$, we add an auxiliary point $\omega$ to the same place
and we get an NHAP by Lemma \ref{l:deltalarge}. Again, the removal of
$\omega$ gives us an NHAP for $(C_1, C_2)$. 

Now, the only case left is $r_1>r_2$, $r_1>\Delta$.
If there are any singletons on $C_1$, we make the contractions exactly the same
way as in the proof of the even case. If Lemma \ref{l:deltalarge} or
\ref{l:equalruns} needs to be applied, we again add an auxiliary point
$\omega$ and proceed as above.

If there is no more singleton to contract on $C_1$ and still $r_1>r_2$ and
$r_1>\Delta$, we have $|C_2| \geq |C_1|/4 \geq r_1 -1$ as in the proof of
the even case and we can use Lemma
\ref{l:paths} to get $r_1-1$ pairwise disjoint hedgehogs covering $C_2$. Now we need to consider two cases: (1) If $b_1 + b_2 > w_1 + w_2$, then we find
an NHAP for $(C_1, C_2)$ in the same way as in the proof of Lemma
\ref{l:nosingletons}, except we do not add the auxiliary point $\sigma$.
(2) If $b_1 + b_2 < w_1 + w_2$, we add an auxiliary point $\omega$ as the
right neighbor of the rightmost major point on $C_1$. The number $r_1$ didn't
change so Lemma \ref{l:nosingletons} gives us an NHAP. Again, the removal
of $\omega$ gives us an NHAP for $(C_1, C_2)$.

There is a straightforward linear-time algorithm for finding an NHAP
on $(C_1,C_2)$ based on the above proof.
\qed

\section{Unbalanced Double-Chains with no NHAP}\label{s:nopath}
In this section we prove Theorem~\ref{t:nopath}.
Let $(C_1,C_2)$ be a double-chain whose points are colored by an equitable 2-coloring, and let $C_1$ be periodic with the following period: 2 black, 4 white, 6 black and 4 white points. 
Let $|C_1| \geq 28(|C_2|+1)$. We want to show that $(C_1,C_2)$ has no NHAP.

Suppose on the contrary that $(C_1,C_2)$ has an NHAP.
Let $P_1, P_2,\dots, P_t$ denote the maximal subpaths of the NHAP containing only points of $C_1$. Since between every two consecutive paths $P_i$, $P_j$ in the NHAP there is at least one point of $C_2$, we have $t\leq |C_2|+1$. In the following we think of $C_1$ as of a cyclic sequence of points on the circle. Note that we get more intervals in this way. Theorem~\ref{t:nopath} now directly follows from the following theorem.

\begin{theorem}\label{t:circle_covering}
Let $C_1$ be a set of points on a circle periodically $2$-colored with the following period of length $16$:
$2$ black, $4$ white, $6$ black and $4$ white points.
Suppose that all points of $C_1$ are covered by a set of $t$ non-crossing alternating and pairwise disjoint paths $P_1, P_2,\dots, P_t$. Then $t > |C_1|/28$.
\end{theorem}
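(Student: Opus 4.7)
The plan is an amortized analysis combining the zig-zag characterization of non-crossing paths on points in convex position with an accounting of the cap regions each path induces.

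First, I would use the standard characterization: any non-crossing path on a convex point set with vertex set $S$ is a \emph{zig-zag}. Fix a starting vertex $v_0\in S$; at each subsequent step, extend the already-visited contiguous arc of $S$ (in $S$'s cyclic order on the convex polygon) by one of its two endpoints. For $P_i$ to be alternating, at each step at least one of the two candidate endpoints must have the color opposite to that of the current vertex. In particular, the visiting order on $S_i$ is far more constrained than a general permutation.

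Second, the pairwise non-crossing condition gives a laminar structure: the chords of any single $P_i$ partition the disk into cap regions, and every other path $P_j$ must lie entirely inside a single such cap. Hence one can analyze the covering recursively by ``peeling off'' one path at a time and treating each induced cap as a smaller instance of the same type of problem.

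Third, a local combinatorial analysis would pin down where the constant $28$ comes from. The candidate explanation is that the longest fragment of the cyclic coloring on which a single alternating zig-zag can be Hamiltonian is a stretch of type $4W\cdot 6B\cdot 4W$ of length $14$: inside its convex sub-polygon, the two outer $4W$-runs merge via the wrap edge into one cyclic $W$-run, producing the balanced two-cyclic-run coloring $6B+7W$ on which a Hamiltonian alternating zig-zag exists. Two such $14$-fragments joined through a $2B$-sentinel contribute up to $2\cdot 14=28$ points; any attempt to extend further forces the creation of cap regions whose monochromatic contents absorb many additional paths.

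The principal difficulty is going from this local observation to the global bound $t>|C_1|/28$. The natural tool is a discharging argument: assign unit charge to each point of $C_1$, and redistribute charges from the interior points of the long $6B$- and $4W$-runs toward the $2B$-sentinels (where path endpoints tend to concentrate), so that every path ends up with total charge strictly less than $28$; summing then gives $|C_1|<28\,t$. The main obstacle I anticipate is selecting the redistribution rules so that the sharp constant $28$ holds uniformly across \emph{all} admissible configurations, particularly the extremal ones that combine a very long cyclic alternating zig-zag (possible here because picking one point per run gives a cyclically alternating subset $S$ of size $|C_1|/4$) with the many forced monochromatic singleton paths in its caps.
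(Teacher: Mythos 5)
Your opening observations are fine as far as they go: each $P_i$, viewed on its own vertex set, does have the ``extend a contiguous arc'' structure, and pairwise non-crossing paths do nest laminarly in the faces cut out by a given path. But the proposal stops exactly where the proof has to start. The entire quantitative content --- the discharging rules and the verification that every path ends up with charge strictly below $28$ --- is not supplied, and you yourself label it the ``principal difficulty'' and ``main obstacle.'' This is not a routine detail: as you note, a single path can cover about $|C_1|/4$ points (one point per run), vastly more than $28$, so any correct scheme must transfer a large amount of charge from such a path to the many short paths it forces inside its caps, and proving that this transfer always balances is essentially the theorem itself. Moreover, the local analysis you offer to explain the constant is wrong: a fragment $4W\cdot 6B\cdot 4W$ has $8$ white and $6$ black points, so no alternating path can be Hamiltonian on it (your own ``merged'' count $6B+7W$ covers only $13$ of the $14$ points), and the ``$2\cdot 14=28$'' accounting does not correspond to any extremal configuration; in the paper the constant arises as $4\cdot(3.5+3.5)$, i.e.\ (average points per run) times (amortized number of runs a path may span).

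For comparison, the paper makes the amortization explicit and checkable. Each path is cut into \emph{zig-zags} (maximal separated subpaths joining two of its bases), \emph{rainbows} (separated end-pieces inside one base) and \emph{branches}, and paths are measured by their \emph{weight}, the number of monochromatic runs they span (half a unit for a partially spanned run). Alternation forces each side of a separated subpath to be monochromatic with the two sides differing in size by at most one, and since the runs have lengths $2,4,6,4$ no two full runs can be paired this way; hence a zig-zag or rainbow weighs at most $1.5$, a branch at most $3$, a leaf at most $3.5$, and a path with $k$ branches at most $3.5k+3.5$. A separate counting lemma (every additional base of a path creates a gap that must contain a leaf) yields $|\mathcal B|\le t-2$, and since the number of runs equals $|C_1|/4$, summing weights gives $|C_1|\le 4\sum_i w(P_i) < 28t$. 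To turn your plan into a proof you would need analogues of precisely these two ingredients --- a per-piece weight bound exploiting the run lengths, and a global branch--leaf trade-off --- neither of which is present in the proposal.
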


\begin{proof}
Each maximal interval spanned by a path $P_i$ on the circle is called a {\em base}. Let $b(P_i)$ denote the number of bases of $P_i$. A path with one base only is called a {\em leaf}. We consider the following special types of edges in the paths. {\em Long edges\/} connect points that belong to different bases. {\em Short edges\/} connect consecutive points on $C_1$. Note that short edges cannot be adjacent to each other. A maximal subpath of a path $P_i$ spanning two subintervals of two different bases and consisting of long edges only is called a {\em zig-zag}. A path is {\em separated\/} if all of its edges can be crossed by a line. Note that each zig-zag is a separated path. A maximal separated subpath of $P_i$ that contains an endpoint of $P_i$ and spans one interval only is a {\em rainbow\/}. We find all the zig-zags and rainbows in each $P_i$, $i=1,2,\dots, t$. Note that two zig-zags, or a zig-zag and a rainbow, are either disjoint or share an endpoint.
A {\em branch\/} is a maximal subpath of $P_i$ that spans two intervals and is induced by a union of zig-zags. 

For each path $P_i$ that is not a leaf construct the following graph $G_i$. The vertices of $G_i$ are the bases of $P_i$. We add an edge between two vertices for each branch that connects the corresponding bases.
If $G_i$ has a cycle (including the case of a ``$2$-cycle''), then one of the corresponding branches consists of a single edge that lies on the convex hull of $P_i$. We delete such an edge from $P_i$ and don't call it a branch anymore. By deleting a corresponding edge from each cycle of $G_i$ we obtain a graph $G'_i$, which is a spanning tree of $G_i$. The {\em branch graph} $G'$ is a union of all graphs $G'_i$.
 
Let ${\cal L }$ denote the set of leaves and ${\cal B }$ the set of branches. Let ${\cal P}$ = $\{P_1, P_2,\dots,$ $P_t\}$.

\begin{obs}\label{o:branches}
The branch graph $G'$ is a forest with components $G'_i$. Therefore, $$|{\cal B}|= \sum_{i, P_i \notin {\cal L}} (b(P_i)-1).$$ \qed
\end{obs}

The branches and rainbows in $P_i$ do not necessarily cover all the points of $P_i$. Each point that is not covered is adjacent to a deleted long edge and to a short edge that connects this point to a branch or a rainbow. It follows that between two consecutive branches (and between a rainbow and the nearest branch) there are at most two uncovered points, that are endpoints of a common deleted edge. By an easy case analysis it can be shown that this upper bound can be achieved only if one of the nearest branches consists of a single zig-zag.

In the rest of the paper, a {\em run\/} will be a maximal monochromatic interval of any color.
In the following we will count the runs that are spanned by the paths $P_i$. The {\em weight\/} of a path $P$, $w(P)$, is the number of runs spanned by $P$. If $P$ spans a whole run, it adds one unit to $w(P)$. If $P$ partially spans a run, it adds half a unit to $w(P)$.

\begin{obs}\label{o:mbranch}
The weight of a zig-zag or a rainbow is at most $1.5$. A branch consists of at most two zig-zags, hence it weights at most three units. \qed
\end{obs}

\begin{lemma} \label{l:path}
A path $P_i$ that is not a leaf weights at most $3.5k+3.5$ units where $k$ is the number of branches in $P_i$.
\end{lemma}

\begin{proof}
According to the above discussion, for each pair of uncovered points that are adjacent on $P_i$ we can join one of them to the adjacent branch consisting of a single zig-zag. To each such branch we join at most two uncovered points, hence its weight increases by at most one unit to at most $2.5$ units. The number of the remaining uncovered points is at most $k+1$. Therefore, $w(P_i) \le 3k + 3 + 0.5 \cdot (k+1) = 3.5k+3.5$.
\end{proof}

\begin{lemma}\label{l:leaf}
A leaf weights at most $3.5$ units.
\end{lemma}

\begin{proof}
Let $L$ be a leaf spanning at least two points. Consider the interval spanned by $L$. Cut this interval out of $C_1$ and glue its endpoints together to form a circle. Take a line $l$ that crosses the first and the last edge of $L$. Note that the line $l$ doesn't separate any of the runs. Exactly one of the arcs determined by $l$ contains the gluing point $\gamma$. 

Each of the ending edges of $L$ belongs to a rainbow, all of whose edges cross $l$. It follows that if $L$ has only one rainbow, then this rainbow covers the whole leaf $L$ and $w(L) \le 1.5$.
Otherwise $L$ has exactly two rainbows, $R_1$ and $R_2$. We show that $R_1$ and $R_2$ cover all edges of $L$ that cross the line $l$. Suppose there is an edge $s$ in $L$ that crosses $l$ and does not belong to any of the rainbows $R_1$, $R_2$. Then one of these rainbows, say $R_1$, is separated from $\gamma$ by $s$. Then the edge of $L$ that is the second nearest to $R_1$ also has the same property as the edge $s$. This would imply that $R_1$ spans two whole runs, a contradiction. It follows that all the edges of $L$ that are not covered by the rainbows are consecutive and connect adjacent points on the circle. There are at most three such edges; at most one connecting the points adjacent to $\gamma$, the rest of them being short on $C_1$. But this upper bound of three cannot be achieved since it would force both rainbows to span two whole runs. Therefore, there are at most two edges and hence at most one point in $L$ uncovered by the rainbows. The lemma follows.
\end{proof}

\begin{lemma}\label{l:leaves}
$|{\cal L}| \geq \sum_{i, P_i \notin {\cal L}}(b(P_i)-2) + 2$.
\end{lemma}

\begin{proof}
The number of runs in $C_1$ is at least $4$. By Lemma~\ref{l:leaf}, if all the paths $P_i$ are leaves, then at least $2$ of them are needed to cover $C_1$ and the lemma follows. 

If not all the paths are leaves, we order the paths so that all the leaves come at the end of the ordering. 
The path $P_1$ spans $b(P_1)$ bases. Shrink these bases to points. These points divide the circle into $b(P_1)$ arcs each of which contains at least one leaf. If $P_2$ is not a leaf then continue. The path $P_2$ spans $b(P_2)$ intervals on one of the previous arcs. Shrink them to points. These points divide the arc into $b(P_2)+1$ subarcs. At least $b(P_2)-1$ of them contain leaves. This increased the number of leaves by at least $b(P_2)-2$. The case of $P_i$, $i>2$, is similar to $P_2$. The lemma follows by induction.
\end{proof}

\begin{corollary} 
$|{\cal B}|\leq |{\cal P}|-2.$
\end{corollary}

\begin{proof}
Combining Lemma~\ref{l:leaves} and Observation~\ref{o:branches} we get the following:
$$ |{\cal B}| = \sum_{i, P_i \notin {\cal L}} (b(P_i)-1)= \sum_{i, P_i \notin {\cal L}} (b(P_i)-2)+ |{\cal P}|-|{\cal L}| + 2 - 2 \leq |{\cal P}|-2.$$
\end{proof}
\bigskip

Now we are in position to finish the proof of Theorem~\ref{t:circle_covering}.
If the whole $C_1$ is covered by the paths $P_i$, then $\sum_{i=1}^t w(P_i) \geq |C_1|/4$. Therefore,

$$|C_1| \leq 4\cdot(3.5|{\cal B}|+3.5(|{\cal P}|-|{\cal L}|)+ 3.5|{\cal L}|) < 4\cdot7|{\cal P}| = 28 |{\cal P}|.$$
\end{proof}

\section{Embedding equitable bipartite graphs}
\label{sec:emb}
\subsection{Embedding on balanced double-chains}
We already know that the balanced double-chain is $2$-color universal for the path $P_n$. 
In this subsection, we further study the class of graphs for which the balanced double-chain 
is $2$-color universal. The three lemmas of this subsection prove the three claims of 
Theorem~\ref{thm:eqonbal}.

\begin{lemma}
\label{lem:forcat}
If the balanced double-chain is $2$-color universal for an equitable bipartite graph $G$ 
then $G$ is a forest of caterpillars.
\end{lemma}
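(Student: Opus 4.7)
I would prove the contrapositive: if $G$ is equitable bipartite and is not a forest of caterpillars, then one can construct an equitable proper $2$-coloring of $G$ and a compatible $2$-coloring of the balanced double-chain $(C_1,C_2)$ admitting no embedding. The candidate bad coloring is the extremal one, in which all points of $C_1$ are black and all points of $C_2$ are white. Because $G$ is equitable, its two bipartition classes differ in size by at most one, and because $(C_1,C_2)$ is balanced we have $\bigl||C_1|-|C_2|\bigr|\le 1$, so we may match the two classes of $G$ with the two chains to make the colorings compatible. Under this pair of colorings, the two bipartition classes of $G$ are forced onto $C_1$ and $C_2$ respectively, and hence every edge of $G$ has to be drawn as a segment between the two facing convex chains.

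The main step is a structural claim for non-crossing bipartite graphs between $C_1$ and $C_2$. Label the points as $C_1=\{p_1<p_2<\cdots\}$ and $C_2=\{q_1<q_2<\cdots\}$ in the left-to-right order. A direct convex-position check gives the \emph{monotonicity} property: for $i<i'$ the segments $p_iq_j$ and $p_{i'}q_{j'}$ are non-crossing if and only if $j\le j'$. From monotonicity I would deduce two facts about any connected component $H$ of a non-crossing bipartite graph realized between the two chains. (i)~For each $p\in V(H)\cap C_1$, the set of $C_2$-neighbors of $p$ is an interval in the left-to-right order on $V(H)\cap C_2$: a vertex $q\in V(H)\cap C_2$ strictly between two neighbors of $p$ but not itself adjacent to $p$ would have to be adjacent to some other $p'\in V(H)\cap C_1$, and both $p'<p$ and $p'>p$ contradict monotonicity. (ii)~For any two consecutive vertices of $V(H)\cap C_1$ the corresponding two neighborhood-intervals intersect in exactly one vertex: two common neighbors would again violate monotonicity, and no common neighbor would split $H$ into two components, contradicting connectedness.

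It then follows that the non-leaf vertices of $H$ are precisely the $C_1$-vertices of degree at least two together with the unique ``bridge'' $C_2$-vertex shared between each pair of consecutive $C_1$-vertices, and that these non-leaves form a single alternating path; every other $C_2$-vertex of $H$ is a leaf attached to one $C_1$-vertex. Hence $H$ is a caterpillar, and so $G$ itself is a forest of caterpillars, contradicting the hypothesis. The main obstacle is the structural argument in the second paragraph; once monotonicity is in place, facts (i) and (ii) are bookkeeping, but (ii) must be stated carefully because the uniqueness of the bridge vertex is exactly what rules out the $Y$-subgraph characteristic of non-caterpillar trees.
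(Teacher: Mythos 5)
Your proposal is correct, and its decisive first step is exactly the paper's: non-universality is witnessed by the coloring in which each chain is monochromatic, which is compatible because $G$ admits an equitable proper $2$-coloring and the double-chain is balanced, so every edge must be drawn as a segment between the two chains. Where you genuinely differ is in how caterpillar-ness is then extracted. The paper uses the forbidden-subgraph characterization (a connected graph is a caterpillar iff it has no cycle and no subdivided $3$-star $K^+_{1,3}$) and rules out each obstruction with a two-line geometric argument: the leftmost edge of an embedded cycle forces a crossing, and for $K^+_{1,3}$ the edge from the middle of the three embedded middle vertices to its leaf is crossed by one of the two segments leaving the root. You instead prove the positive structural statement directly from the ladder/monotonicity property: neighborhoods of $C_1$-vertices of a component $H$ are intervals of $V(H)\cap C_2$, consecutive intervals share exactly one bridge vertex, hence $H$ is a caterpillar --- in effect re-proving the known characterization of graphs admitting non-crossing two-layer drawings. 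Both arguments rest on the same geometric fact; the paper's route is shorter given the characterization, while yours is self-contained and yields a slightly stronger structural description of what can be embedded. Three small points to tighten: justify monotonicity from the double-chain definition (any two points of $C_1$ and two of $C_2$ are in convex position because each chain lies strictly on one side of every line spanned by the other, and the $x$-orders then determine which pair of segments are the crossing diagonals); make acyclicity explicit, since facts (i) and (ii) as stated speak only of intervals and bridges --- it follows readily from the ordering consequence of monotonicity that every neighbor of a left $C_1$-vertex precedes every neighbor of a right one; and when $G$ is disconnected, say ``the color classes of a fixed equitable proper $2$-coloring'' rather than ``the bipartition classes,'' which need not be unique.
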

\begin{proof}
Let $K^+_{1,3}$ be the $3$-star with subdivided edges (see Fig.~\ref{f:k13}).
A connected graph is a caterpillar if and only if it contains no cycle and no $K^+_{1,3}$ as a subgraph.

We will color all points of one chain white and points of the other chain black so that the resulting 
coloring is compatible with the $2$-coloring of $G$. We assume for contradiction that $G$ can be 
embedded on it and that it contains either a cycle or $K^+_{1,3}$.

As the double-chain has monochromatic chains, all the edges connect the two chains.
Because the embedding has no edge crossings, we can consider the leftmost edge of the cycle 
and let $u$ and $v$ be its endvertices. Then the two edges of the cycle incident to exactly one of 
$u$ and $v$ cross.

We now assume that $K^+_{1,3}$ can be embedded on the double-chain and let the color of its root vertex 
be white. Let $\omega_1$ be 
the white point where the root of $K^+_{1,3}$ is mapped and let $\beta_1$, $\beta_2$, $\beta_3$ be 
(from left to right) the three black points where the middle vertices are mapped. Then $\beta_2$ 
is connected by an edge to some white leaf vertex of $K^+_{1,3}$, but this edge is crossed either by the 
segment $\omega_1 \beta_1$ or by $\omega_1 \beta_3$. See Fig.~\ref{f:k13}.
\myfig{k13}{a) $K^+_{1,3}$ and b) impossibility of its embedding on the double chain with monochromatic chains}
\end{proof}

The \emph{central path} in a caterpillar is the set of non-leaf vertices. 

\begin{lemma}
\label{lem:shortcat}
If an equitable bipartite graph $G$ on $n$ vertices is a caterpillar with at most $\lfloor n/2 \rfloor$ 
vertices on the central path, then the balanced double-chain is $2$-color universal for $G$.
\end{lemma}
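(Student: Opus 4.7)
The plan is to embed the central path $P = v_1 v_2 \cdots v_k$ of the caterpillar first as a non-crossing alternating path on a carefully chosen subset of the double-chain, and then to hang each leaf off its central vertex using the remaining points. Since $G$ is equitable and $k \le \lfloor n/2 \rfloor$, the caterpillar has at least $\lceil n/2 \rceil$ leaves, so there is ample room to accommodate them after the central path has been placed.

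First I would select a set $S^{\star}$ of $k$ host points for the central path whose $2$-coloring is compatible with $P$. Writing $\ell_i$ for the number of leaves of $v_i$, the selection must be engineered so that in the neighborhood of the point assigned to $v_i$ there lie exactly $\ell_i$ unused points of the opposite color, forming a ``pocket'' for $v_i$'s leaves. Second, I would apply Theorem~\ref{t:main} to the sub-double-chain $(S^{\star} \cap C_1, S^{\star} \cap C_2)$ to realize $P$ as an NHAP on $S^{\star}$; in the degenerate cases where this sub-double-chain violates the fifth-condition of Theorem~\ref{t:main} (for instance, when $S^{\star}$ lies almost entirely on one chain), the central path is short enough to be embedded by a direct ad hoc argument. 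Third, each leaf of $v_i$ is drawn as a straight segment from $v_i$'s image $p_i$ to one of the reserved opposite-color points in the pocket beside $p_i$. Since these pockets lie in disjoint regions of the plane bounded by consecutive edges of the NHAP, and within each pocket the leaf segments form a non-crossing fan meeting at $p_i$, no crossings can arise.

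The main obstacle is the first step: choosing $S^{\star}$ so that the pockets simultaneously respect the color compatibility, contain the correct number of opposite-color points to serve as leaves of each $v_i$, and yield a sub-double-chain on which Theorem~\ref{t:main} applies. I expect this to require a case analysis on the interplay between the monochromatic run structure of the two chain colorings and the leaf distribution of the caterpillar. Handling $v_i$ whose leaf count $\ell_i$ exceeds the local supply of opposite-color points on one chain will likely force rebalancing decisions: placing $v_i$ on the chain opposite to the one that its color would most naturally suggest, or splitting its leaves between pockets on both sides of $p_i$. The slack afforded by the hypotheses $|C_1|-|C_2|\le 1$ and $k\le\lfloor n/2\rfloor$ should, however, be enough to carry each such case through.
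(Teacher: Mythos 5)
There is a genuine gap: the entire difficulty of the lemma is concentrated in your first step (choosing $S^{\star}$ together with correctly sized, correctly colored, geometrically accessible ``pockets''), and that step is never carried out -- you only state that you expect a case analysis to succeed. Worse, the plan as described cannot be completed in the stated order, because Theorem~\ref{t:main} is a black box: it produces \emph{some} NHAP on $S^{\star}$, with no control over which point of $S^{\star}$ receives which central-path vertex $v_i$ (the visiting order of the path is whatever the hedgehog construction yields, not one you prescribe), and no control over the shape of the path (it may contain spine edges inside a chain and long edges between the chains). Hence you cannot reserve, in advance, $\ell_i$ opposite-colored points ``beside the image of $v_i$'', nor can you guarantee that the complement of $S^{\star}$ decomposes into regions bounded by consecutive NHAP edges that are visible from the right attachment points. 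The claim that the leaf fans lie in disjoint regions is therefore an assertion about a configuration you have not constructed. The fallback ``ad hoc argument'' for the case where $S^{\star}$ violates the one-fifth condition of Theorem~\ref{t:main} is also not minor: the central path may have up to $\lfloor n/2\rfloor$ vertices, so this degenerate case is not a small finite check.

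For comparison, the paper avoids invoking Theorem~\ref{t:main} altogether and embeds the central path directly, which is exactly what restores the control your plan lacks. With major/minor colors defined as in the proof of Theorem~\ref{t:main}, the hypothesis on the central-path length guarantees that the minor points of each chain are at most the number of leaves of that color; one removes that many leaves, embeds the pruned caterpillar greedily left to right on the major points only (each central vertex on the leftmost unused major point of the chain where its color is major, its remaining leaves on the leftmost unused major points of the other chain), and finally attaches the removed leaves to the minor points by the closest-pair matching argument from Lemma~\ref{l:hedgehogs}. If you want to salvage your approach, you would have to replace the appeal to Theorem~\ref{t:main} by an explicit, order-controlled embedding of the central path -- at which point you are essentially reconstructing the paper's greedy construction.
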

\begin{proof}
Let $b_i$ be the number of black points on the chain $C_i$ and $w_i$ the number of white points on $C_i$.
Since the coloring is equitable, we may assume that $b_1 \ge w_1$ and $w_2 \ge b_2$.
Then black is {\em the major color} of $C_1$ and {\em the minor color} of $C_2$, and white is 
{\em the major color} of $C_2$ and {\em the minor color} of $C_1$. Points in the major
color are called {\em major points}. Points in the minor color are called {\em minor points}.

Observe that the number of minor points on each chain is at most the number of leaves of $G$ of that color. 
Let $G'$ be the graph with $b_1$ black and $w_1$ white vertices obtained from $G$ by removing some 
$b_2$ black and $w_1$ white leaves. 

In the first phase, we embed $G'$ on the set of major points of 
the two chains. We take the vertices of the central path of $G'$ starting from one of its ends. 
A vertex $v$ of the central path is placed on the leftmost unused major point on the chain 
where the color of $v$ is major. The leaves of $v$ in $G'$ are then successively placed on the leftmost unused 
major points of the other chain.

In the second phase, we map all the leaves removed in the first phase on minor points.
In the same greedy way as in the proof of Lemma~\ref{l:hedgehogs}, we keep connecting the closest pair
of an unused white point of $C_1$ and a black point of the central path that still misses at least 
one leaf. The same is done on $C_2$.

This guarantees that no crossing appears and that every vertex is mapped to some point. See 
Fig.~\ref{f:caterpillar}.
\myfig{caterpillar}{embedding a caterpillar on a balanced double-chain; the bold edges form the central path}
\end{proof}

\begin{lemma}
If a forest of stars $G$ is $2$-colored equitably and properly, then $G$ can be embedded on every 
compatibly $2$-colored balanced double-chain.
\end{lemma}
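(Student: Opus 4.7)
My plan is to give an explicit embedding. I split the forest as $G = G_B \sqcup G_W$, where $G_B$ consists of the stars with black centers (and white leaves) and $G_W$ those with white centers. Write $s_B, s_W$ for the numbers of stars and $\ell_B, \ell_W$ for the respective total numbers of leaves; the compatibility hypothesis gives $s_B + \ell_W = b_1 + b_2$ and $s_W + \ell_B = w_1 + w_2$, where $b_i, w_i$ denote the color counts on $C_i$. Note that the interesting case is when both $G_B$ and $G_W$ are nonempty, since otherwise the Kaneko--Kano result quoted in the Introduction applies directly.

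The key step is to find a chord of the convex hull of $C_1 \cup C_2$ that splits the point set into two parts $P_B, P_W$ with exactly the color counts $(s_B, \ell_B)$ and $(\ell_W, s_W)$ respectively. With such a chord in hand, I would embed $G_B$ on $P_B$ and $G_W$ on $P_W$ independently, using Kaneko--Kano in each part (place the same-color centers arbitrarily on matching-color points, then match the leaves by a minimum-total-length matching to obtain a non-crossing embedding). Since $P_B$ and $P_W$ are separated by the chord, no edge drawn inside $P_B$ can cross any edge drawn inside $P_W$, and the two sub-embeddings combine to an embedding of $G$.

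The main obstacle is the existence of the splitting chord. A single vertical-line sweep need not pass through the target color counts at any position, so I anticipate needing to consider chords that connect a point of $C_1$ to a point of $C_2$, using the two degrees of freedom (one endpoint on each chain) together with a discrete two-dimensional intermediate-value argument. As one varies the two endpoints along their chains, the color counts on one side of the chord change by $\pm 1$ per step; the balance of the double-chain together with the equitability of $G$'s coloring should then force the trajectory of counts to reach $(s_B, \ell_B)$. Pinning this argument down rigorously, particularly in boundary cases where one of the chains is very short in one color, is the step I expect to be the hardest; I would likely handle these boundary cases separately by directly constructing the embedding using the hedgehog-style techniques from Section~\ref{s:mainthm}.
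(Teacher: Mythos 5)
Your reduction to two monochromatic-center subforests and the use of the minimum-length star embedding inside each part are fine as far as they go, but the entire proof hinges on the splitting chord, and that step is not established. The discrete intermediate-value argument you sketch does not work as stated: if you pivot a chord around its endpoint $q\in C_2$ while moving the other endpoint from $p\in C_1$ to the adjacent point $p'$, every point of $C_2$ lying in the double wedge swept at $q$ changes sides, and there can be many such points (e.g.\ when $C_2$ is very flat relative to the direction of the chord). So the color counts on one side do \emph{not} change by $\pm 1$ per step, and no parity/continuity argument is available without substantial extra work. Note also that which bipartitions of a double-chain are realizable by a line depends on the particular geometric realization (a line through $p$ and $q$ can even cut a chain into an interval on one side and a union of two end-intervals on the other), and the lemma must hold for \emph{every} balanced double-chain and \emph{every} compatible coloring; you give no argument that the required split $(s_B,\ell_B)\,/\,(\ell_W,s_W)$ exists in this generality. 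Your fallback, ``handle the remaining cases with hedgehog-style techniques,'' is not an argument either: the hedgehog machinery of Section~\ref{s:mainthm} is built for Hamiltonian alternating paths, and it is not indicated how it would produce star embeddings in exactly the cases where your chord argument fails. As it stands, the proposal is therefore incomplete at its central step.

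For comparison, the paper avoids any geometric splitting altogether: it augments the star forest $G$, purely combinatorially, to a properly $2$-colored caterpillar on the same vertex set by stringing the star centers into an alternating path (using $2$-vertex stars as fillers, hanging leftover stars by a leaf, and attaching isolated vertices), and a counting argument shows the resulting central path has at most $\lfloor n/2\rfloor$ vertices; Lemma~\ref{lem:shortcat} then embeds this caterpillar, hence $G$, on every compatibly colored balanced double-chain. If you want to rescue your approach, the honest formulation of what you need is: for every balanced double-chain with a compatible coloring and every admissible pair of color counts, some line realizes that pair exactly on one side --- and you would have to either prove this (the $\pm1$ claim being false, this needs a new idea) or find a different reduction such as the paper's.
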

\begin{proof}
We take some fixed proper equitable $2$-coloring of $G$. 

We show that by adding edges to $G$, we are able to create a properly $2$-colored
caterpillar on the set of all vertices of $G$ and with at most $\lfloor n/2 \rfloor$ non-leaf vertices.
By Lemma~\ref{lem:shortcat} this caterpillar can be embedded on every compatibly $2$-colored balanced 
double-chain and thus $G$ can be embedded.

The cases when $n \leq 3$ and when $G$ has no edges are trivial.

For every $i\geq 3$, let $k_i$ ($h_i$) be the number of stars on $i$ vertices and with black (white) central vertex.
In case of $2$-vertex components of $G$, we cannot distinguish the 
central vertex and we let $n_2$ be their number. We also let $n_1$ be the total number of $1$-vertex 
components of $G$ as it is not necessary for the proof to count black and white ones separately.

We assume without loss of generality, that at least half of the stars on at least three vertices have black central vertex.
We start connecting the central vertices of stars on at least three vertices into an alternating path starting 
with a black vertex. At some point we run out of stars with white central vertex. We then use the stars on 
two vertices as stars with 
white central vertex. If we run out of stars with black central vertex, we use every second star on two vertices as 
a star with black central vertex. Otherwise we run out of stars on two vertices. Then we start connecting 
each of the remaining stars on at least three vertices by an edge between one of its white leaves and the last 
black vertex on the path.

The resulting graph is composed of a connected graph $T$ and all $1$-vertex components of $G$. The graph $T$ is 
a properly colored caterpillar and the created path is its central path $P$. 
Since $G$ has some edges, $P$ is not empty.

If $P$ contains only one vertex $v$, we pick one of its leaves, $u$, and connect every $1$-vertex component of $G$
either to $u$ or to $v$, depending on its color. The central path then has $2$ vertices, which is at most 
$\lfloor n/2 \rfloor$.

If $P$ has at least two vertices, we connect every $1$-vertex component of $G$ to a vertex of 
the other color on the central path. 

See Fig.~\ref{f:starstocat}.

\myfig{starstocat}{connecting stars to form a caterpillar}

It remains to show that the central path is not too long. The total number of vertices is
\begin{equation} \label{eq:vercnt}
n = n_1+2n_2+\sum_{i=3}^{n}i(k_i+h_i).
\end{equation}

If $\sum_{i=3}^{n}k_i \leq n_2 + \sum_{i=3}^{n}h_i$, then every vertex of the central path has at least one leaf
and thus the caterpillar has at most $\lfloor n/2 \rfloor$ non-leaf vertices. 

Otherwise, the central path starts and ends with a black vertex and the black vertices of the central path are 
exactly the black centers of stars on at least three vertices. The central path thus has $2\sum_{i=3}^{n}k_i-1$ 
vertices.

Because the $2$-coloring of $G$ is equitable, the number of black vertices of $G$ is at least 
$\lfloor n/2 \rfloor$ and thus
\[
\sum_{i=3}^{n}k_i + \sum_{i=3}^{n}(i-1)h_i + n_2 + n_1 \geq \left\lfloor \frac{n}{2} \right\rfloor.
\]

At most $\lfloor n/2 \rfloor + 1$ vertices of $G$ are white, which leads to
\[
\left\lfloor \frac{n}{2} \right\rfloor + 1 \geq \sum_{i=3}^{n}h_i + \sum_{i=3}^{n}(i-1)k_i + n_2.
\]

Putting the two inequalities together gives us
\begin{equation}
\sum_{i=3}^{n}(i-2)h_i + n_1 \geq \sum_{i=3}^{n}(i-2)k_i - 1. \label{eq:balgraph}
\end{equation}

The number of vertices of the central path is at most $\lfloor n/2 \rfloor$, because
\begin{align*}
2 \left(2\sum_{i=3}^{n}k_i-1 \right) &\leq \sum_{i=3}^{n}2(i-1)k_i-2 \\
& < \sum_{i=3}^{n}ik_i + \sum_{i=3}^{n}(i-2)k_i - 1 \\
& \leq \sum_{i=3}^{n}ik_i + \sum_{i=3}^{n}(i-2)h_i + n_1 \qquad \textrm{by Eq.~\ref{eq:balgraph}}\\
& \leq \sum_{i=3}^{n}i(k_i+h_i) + n_1 \\
& \leq n  \qquad \textrm{by Eq.~\ref{eq:vercnt}.}
\end{align*}

\end{proof}

\subsection{Open problems}

It seems plausible that the balanced double-chain is $2$-color universal for all equitable forests
of caterpillars.

\begin{conjecture}
The balanced double-chain is $2$-color universal for an equitable bipartite graph $G$ if and only if
$G$ is a forest of caterpillars.
\end{conjecture}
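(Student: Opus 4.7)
The forward direction of the conjecture is exactly Lemma~\ref{lem:forcat}. For the reverse, the task is to embed every equitable forest of caterpillars $G$ on $n$ vertices on every compatibly $2$-colored balanced double-chain. The plan is to unify Theorem~\ref{t:main} (the path case), Lemma~\ref{lem:shortcat} (short central paths), and the star-forest case of Theorem~\ref{thm:eqonbal} into one argument by extending the hedgehog framework of Section~\ref{s:mainthm} so that it can absorb caterpillar leaves.

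For a forest $G$, I would first reduce to the single-caterpillar case by a balanced cut: choose a vertical split of $(C_1,C_2)$ whose prefix sizes and color counts match a partition of $G$ into two sub-forests, then recurse. Since one may reorder components and swap the two color classes of each component independently, such a balanced partition should exist by a greedy argument. For a single caterpillar $T$ with central path of length $p$, if $p \le \lfloor n/2 \rfloor$ I would apply Lemma~\ref{lem:shortcat} directly; otherwise $T$ is ``path-like'' (has fewer than $n/2$ leaves) and should be treated as a perturbation of a path. The key new object would be a \emph{tufted hedgehog} on a chain $C_i$: an ordinary hedgehog whose body points may carry additional leaves, where all leaves of a single body point are placed on a contiguous arc of the opposite chain $C_{3-i}$. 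Each tuft is then a convex fan from one body point onto a contiguous arc, so the drawing stays non-crossing provided distinct tufts use disjoint arcs.

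The analog of Lemma~\ref{l:hedgehogs} would characterize which collections of tufted bodies extend to pairwise non-crossing tufted hedgehogs covering $C_i$, with two invariants: the usual identity $\Delta = d_{00}-d_{11}$, and a new leaf-budget condition that the tufts planned on $C_i$ fit between the spines already installed on $C_{3-i}$. Lemmas~\ref{l:deltalarge}--\ref{l:nosingletons} would then be rewritten with tufted hedgehogs in place of hedgehogs, the tufted hedgehogs on the two chains being connected by central-path edges exactly as the ordinary hedgehogs are in Section~\ref{s:mainthm}. The short-path regime of Lemma~\ref{lem:shortcat} and the long-path tufted-hedgehog regime should then overlap at $p \approx n/2$, gluing the two halves together.

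The hard part will be the joint bookkeeping. Each tufted hedgehog has two interlocking budgets: non-leaf counts on its own chain (as in Theorem~\ref{t:main}) and leaf counts on the opposite chain, and the runs of each chain must simultaneously accommodate both the bodies and the tufts placed between the opposite chain's spines. The singleton-contraction step of Section~\ref{s:mainthm} is particularly delicate: contracting a singleton on $C_1$ whose corresponding central-path vertex has many leaves planned on $C_2$ must not displace the already-planned tufts, so the contraction operation must be upgraded to track leaf capacity globally. Finally, the parity handling of Section~\ref{s:odd} together with the orientation choice (which color is major on each chain) must be coordinated with the forest partition rather than decided component-by-component, and making this coordination consistent across all components looks like the main technical obstacle of the program.
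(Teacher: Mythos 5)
This statement is stated in the paper as an open conjecture: the paper proves only the forward direction (your citation of Lemma~\ref{lem:forcat} is correct) and two special cases of the reverse direction (equitable caterpillars with at most $\lfloor n/2\rfloor$ non-leaf vertices, and equitably properly colored star forests). So there is no paper proof to match, and what you have written is a research program, not a proof; you say as much yourself, and the gaps you flag are genuine and not minor.

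Concretely: (i) the reduction from forests to a single caterpillar by a ``vertical split'' does not preserve the hypotheses you need later --- a prefix of a balanced double-chain is in general a badly unbalanced double-chain, so neither Theorem~\ref{t:main} nor Lemma~\ref{lem:shortcat} applies to the pieces, and Theorem~\ref{t:nopath} shows that unbalanced double-chains can fail even for paths, so the recursion cannot be justified by a greedy argument alone; moreover you must simultaneously match, for each piece, its size on each chain and its color counts, a system of constraints for which no existence argument is given. (ii) The proposed analog of Lemma~\ref{l:hedgehogs} for ``tufted hedgehogs'' is the crux and is unproved: in the paper the hedgehog decompositions of $C_1$ and $C_2$ are constructed independently, chain by chain, whereas your leaf-budget condition couples the two chains (tufts from bodies on $C_i$ must land on arcs of $C_{3-i}$ interleaved with that chain's own bodies, spines, and the connecting edges), and nothing shows this interleaving is always feasible --- this is precisely where the difficulty of the conjecture lives, since Lemma~\ref{lem:shortcat} works only because the minor points on each chain can be absorbed entirely by leaves. (iii) The claim that the short-central-path regime and the long-central-path regime ``overlap at $p\approx n/2$'' is asserted, not argued, and the interaction of singleton contractions with planned tufts is acknowledged but not resolved. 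As it stands the proposal identifies plausible machinery but establishes neither direction of the missing implication beyond what the paper already proves.
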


Some graphs for which the balanced double-chain is not $2$-color universal have a different $2$-color 
universal set. For example, the balanced double-chain is not $2$-color universal for $K^+_{1,3}$ by 
Lemma~\ref{lem:forcat}, but it is easy to verify that the double-chain with one chain composed of only 
one vertex is.

There even exist graphs with a $2$-color universal set of points, but no double chain is $2$-color universal for 
them. Consider the properly colored $K^+_{1,4}$ with black central vertex. It is not embeddable on double-chains 
colored as in Fig.~\ref{f:k14-doublechain}. But a modification of the double chain in Fig.~\ref{f:glued-double-chain} 
is $2$-color universal for $K^+_{1,4}$.

\myfig{k14-doublechain}{colorings of double-chains not admitting $K^+_{1,4}$}

\myfig{glued-double-chain}{a $2$-color universal point set for $K^+_{1,4}$}

Some equitable bipartite planar graphs have no $2$-color universal set of points.

\begin{claim}
If $G$ is a bipartite planar quadrangulation on at least five vertices, then $G$ has no $2$-color universal 
set of points.
\end{claim}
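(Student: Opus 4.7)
The plan is: given any point set $S$ with $|S| = n \ge 5$ in general position, I construct a proper $2$-coloring of $G$ and a compatible $2$-coloring of $S$ that forbids any straight-line embedding. The argument rests on a structural property of $G$. Because $G$ is bipartite planar with the maximum number of edges $|E(G)| = 2n-4$, a short edge-counting argument shows $G$ is $2$-connected: if $v$ were a cut vertex separating $G-v$ into $c \ge 2$ components with $n_i$ vertices each, then $|E(G)| \le \sum_i (2(n_i+1)-4) = 2n - 2 - 2c \le 2n-6$, contradicting $|E(G)| = 2n-4$. Combining this with Euler's formula and bipartiteness, the total face length equals $2|E(G)| = 4(n-2) = 4|F|$ and each face has length at least $4$, so in any plane embedding of $G$ every face, including the outer face, is a $4$-cycle.

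In any straight-line embedding of $G$ on $S$, every vertex of $\mathrm{conv}(S)$ lies on the boundary of the outer face, and the hull vertices appear there in the cyclic order inherited from the convex hull of $S$. Since that boundary is a $4$-cycle, $|\mathrm{conv}(S)| \le 4$, and the outer $4$-cycle has exactly two vertices of each color. A case analysis on $|\mathrm{conv}(S)|$ now yields the result. If $|\mathrm{conv}(S)| \ge 5$, no coloring admits an embedding. If $|\mathrm{conv}(S)| = 4$, color the four hull points in cyclic order as black, black, white, white; this pattern cannot match the alternating cyclic pattern black, white, black, white of the outer $4$-cycle under any rotation or reflection. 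If $|\mathrm{conv}(S)| = 3$, color all three hull points in the larger color class of $G$, which has size at least $\lceil n/2 \rceil \ge 3$ for $n \ge 5$; then the hull contributes three vertices of one color to the outer $4$-cycle, contradicting the ``two of each color'' property. The remaining interior points are colored to match the color class sizes of $G$; since $G$ contains a $4$-cycle, each of its color classes has size at least $2$, so this completion is always possible.

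The main obstacle I foresee is the step that makes ``every face is a $4$-cycle'' hold for \emph{every} plane embedding of $G$ simultaneously: $G$ need not be $3$-connected (for example $K_{2,3}$ admits two combinatorially distinct plane embeddings), so one cannot appeal to a canonical plane embedding. The Euler-formula identity in the first paragraph bypasses this uniformly. After that, the geometric and coloring argument is elementary.
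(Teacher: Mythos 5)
Your proof is correct and takes essentially the same route as the paper: Euler's formula forces every face of every plane drawing of $G$ to have boundary length exactly $4$, all convex hull points of $S$ must lie on the outer face, and an adversarial compatible coloring of the hull points then clashes with the proper $2$-coloring of that face --- the paper does this with one uniform coloring (three hull points in the larger color class, which has size at least $3$ since $n\ge 5$), avoiding your case split on the hull size. Your $2$-connectivity detour is not needed for any of this (a face walk of length $4$ can never carry three same-colored vertices, nor four hull points out of convex-position order), which is just as well, since in that count the bound $2(n_i+1)-4$ is off for blocks that are single edges and would need a small patch.
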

\begin{proof}
Because the bipartite graph $G$ has no $3$-cycle, each of its faces has at least four vertices. 
Then, by Euler's formula, every planar drawing of $G$ is a quadrangulation.

Take a set $S$ of points and let $H(S)$ be the set of points of $S$ on its convex hull. In a straight-line 
planar drawing of a graph on a set $S$ of points, the points of $H(S)$ lie all on the outer face of the drawing. 
Thus $G$ can only be drawn on $S$ if $3 \leq |H(S)| \leq 4$. In a proper coloring of $G$ on at least five vertices,
one color class contains at least three vertices. If we color three points of $H(S)$ by this color and the rest
arbitrarily, $G$ cannot be embedded, because no face in a drawing of $G$ can contain three vertices of one color. 
\end{proof}

The results of this paper solve only a few particular cases of the following general question.
\begin{question}
Which planar bipartite graphs have a $2$-color universal set of points?
\end{question}

\section*{Acknowledgment} We thank Jakub \v{C}ern\'y for his active participation at
the earlier stages of our discussions.


\end{document}